%
\documentclass[reqno]{amsart}
%
%
\usepackage[colorlinks=true,linkcolor=blue]{hyperref}    
%
%
%
\usepackage{amssymb}
\usepackage{amsfonts}
\usepackage{amsthm}
\usepackage{amsmath}
\usepackage{latexsym}
\usepackage{color}

%
%
\numberwithin{equation}{section}
%
%
\newcommand{\point}{\par\noindent$\bullet$ \ }

%

%
%
\newcommand{\R}{{\mathbb R}}

\newcommand{\Z}{{\mathbb Z}}
\newcommand{\N}{{\mathbb N}}
\newcommand{\Q}{{\mathbb Q}}
\newcommand{\C}{{\mathbb C}}
\newcommand{\Y}{\Upsilon}
%
%

\renewcommand{\mod}{{\rm mod}\,}
\newcommand{\re}{{\rm Re}\,}
\newcommand{\im}{{\rm Im}\,}

\newcommand{\res}{{\rm res}\, }
\newcommand{\Const}{{\rm Const}}
\newcommand{\F}{{\mathcal F}}
\renewcommand{\P}{{\mathcal P}}
%
%
\theoremstyle{plain}
\newtheorem{Th}{Theorem}[section]
\newtheorem{Le}{Lemma}[section]
\newtheorem{Pro}{Proposition}[section]

\theoremstyle{definition}
\newtheorem{Rem}{Remark}[section]
\newtheorem{Def}{Definition}[section]
\setcounter{section}{0}
\begin{document}
%
\title{An exact renormalization formula for the Maryland model}
\thanks{
2010 \emph{Mathematics Subject Classification.} 39A45, 82B44 (Primary), 11L03 (Secondary).\\
\indent \ \emph{Key words and phrases. } Maryland model, Gaussian exponential sum, renormalization formulas, 
monodromy matrix, minimal meromorphic solution.\\
\indent \ $^1$ Department of Mathematical Physics, St.~Petersburg State University,  Ulianovskaja, 1,  St.~Petersburg-Petrodvoretz, 198904 Russia, E-mail: \href{mailto:fedotov.s@mail.ru}{fedotov.s@mail.ru}\\
\indent \ $^2$ Chebyshev Laboratory, St.~Petersburg State University,
14th Line, 29b, Vasilyevsky Island, St.~Petersburg, 199178 Russia, E-mail: \href{mailto:sandomirski@yandex.ru}{sandomirski@yandex.ru}\\
\indent \ The work of the first author is supported by the Russian Foundation of
Basic Research under grant 11-01-00458-a; 
the work of the second one is supported by the Chebyshev 
Laboratory (St. Petersburg State University)  under RF Government 
grant 11.G34.31.0026 and by JSC ``Gazprom Neft''}
\maketitle
\vskip -0.5cm
\begin{center}
{\small{ ALEXANDER FEDOTOV$^{1}$ AND FEDOR SANDOMIRSKIY}}$^{2,1}$
\end{center}


%
%
\begin{abstract}
We discuss the difference Schr\"o\-dinger equation
$\psi_{k+1}+\psi_{k-1}+\lambda \cot(\pi\omega k+\theta)\psi_k=E\psi_k$, $k\in\Z$,
where $\lambda$, $\omega$, $\theta$ and $E$ are parameters.
We obtain explicit renormalization formulas relating its solutions 
for large $|k|$  to solutions of the equation with new parameters $\lambda$, 
$\omega$, $\theta$ and $E$ for bounded $|k|$. These formulas are similar to the 
renormalization formulas from the theory of Gaussian  exponential sums.
\end{abstract}
\vskip 1cm
\section{Introduction}
We consider  the difference Schr\"odinger equation
\begin{equation}
\label{eq_maryland_lattice_equation}
\psi_{k+1}+\psi_{k-1}+\lambda\cot(\pi(\omega k+\theta))\psi_k=E\psi_k,
\quad k\in\Z,
\end{equation}
where $\omega\in(0,1)\setminus\Q$, \ $\theta\in[0,1)$, \ $\lambda>0$   
and $E\in\R$ are parameters; $E$ is called the {\it spectral parameter}.

The Schr\"odinger operator in $l^2(\Z)$ corresponding 
to~\eqref{eq_maryland_lattice_equation} is referred to as the Maryland model.
It is one of the popular models of spectral 
theory~\cite{CFKS,PF}:  being a non-trivial almost periodic operator, 
many of its important spectral properties can be explicitly described. 
There are interesting open problems related to the behavior of solutions 
of~\eqref{eq_maryland_lattice_equation} for large $|k|$. For example, 
one can mention the study of the spectrum of the Maryland model for 
frequencies that are neither well no badly approximable by rational 
numbers, e.g.~\cite{PF}, the investigation of the multiscale behavior of 
its (generalized) eigenfunctions, and the explanation of the time evolution 
generated by the Maryland model, e.g.~\cite{FGP}.

In this paper, for sake of brevity we call~\eqref{eq_maryland_lattice_equation}
the Maryland equation.

The central result of this paper is a renormalization formula expressing 
solutions of~\eqref{eq_maryland_lattice_equation} in terms of solutions of 
the Maryland equation with new parameters $\omega,\theta,\lambda, E$ 
for smaller $|k|$. This formula is similar to the well-known renormalization
formula from the theory of Gaussian exponential sums, see, for 
example,~\cite{FK:12}. 

To describe the main result, define the parameters $l>0$ and $-\pi<\eta<\pi$ 
in  terms of  $E$ and $\lambda$ so that  $E+i\lambda=2\cos (\eta+il)$.
Then
\begin{equation}
\label{eq_Gamma_l_E_lambda}
\lambda=-2{\rm sh}\, l \sin\eta,\qquad E=2{\rm ch}\, l \cos\eta.
\end{equation}
Put
\begin{equation}\label{eq:matrix-F}
\F(z,\eta,l)=\begin{pmatrix}
2{\rm ch}\, l \cos\eta +2{\rm sh}\, l \sin\eta\cot (\pi z) & -1\\1 & 0
\end{pmatrix}.
\end{equation}
The Maryland equation~\eqref{eq_maryland_lattice_equation} 
is equivalent to the equation
\begin{equation}\label{matrix-Mary}
  \Psi_{k+1}=\F(k\omega+\theta,\eta,l)\Psi_k,\quad k\in\Z
\end{equation}
(write down the equation for the first component of a vector solution 
of~\eqref{matrix-Mary} !). Let $\P_k(\omega,\theta,\eta,l)$ be the 
matrix solution of~(\ref{matrix-Mary}) that is equal to the identity 
matrix for $k=0$. Note that
\begin{gather*}
  \P_k(\omega,\theta,\eta,l)=\F(\theta+(k-1)\,\omega,\eta,l)\dots
\F(\theta+\omega,\eta,l)\,\F(\theta,\eta,l),\quad k\ge1,\\
\P_k(\omega,\theta,\eta,l)=\F(\theta+k\,\omega,\eta,l)^{-1}\dots
\F(\theta-2\omega,\eta,l)^{-1}\,\F(\theta-\omega,\eta,l)^{-1},\quad k\le-1.
\end{gather*}
The main result is described by
\begin{Th}\label{th:main} For any $N\in\Z$, one has
\begin{equation}
\label{eq:main_renormalization_formulae}
\P_N(\omega,\theta,\eta,l)=
\Psi(\{\theta+N\omega\},\eta,l)\,
\sigma_2\,\P_{N_1}(\omega_1,\theta_1,\eta_1,l_1)\,\sigma_2\,\Psi^{-1}(\theta,\eta,l),
\end{equation}
where 
\begin{equation}
\label{eq_new_parameters_and_old}
N_1=-[\theta+N\omega],\quad
\omega_1=\left\{\frac{1}{\omega}\right\},\quad 
\theta_1=\left\{\frac{\theta}{\omega}\right\},\quad
\eta_1=\frac{\eta}\omega\mod2\pi,\quad l_1=\frac{l}\omega,
\end{equation}
$[\,x\,]$ and  $\{\,x\,\}$ denote the integer and the fractional parts of $x\in\R$,
\begin{equation}\label{eq:Psi-sigma}
\Psi(z,\eta,l)=\begin{pmatrix} \psi(z,\eta,l) & \psi(z-1,\eta,l)\\
\psi(z-\omega,\eta,l) & \psi(z-1-\omega,\eta,l)\end{pmatrix},\quad
\sigma_2=\begin{pmatrix} 0 & -i\\ i& 0\end{pmatrix},
\end{equation}
and $\psi$ is the minimal meromorphic solution of the ``complex Maryland equation''
\begin{equation}
  \label{eq:complex-Mary}
  \psi(z+\omega)+\psi(z-\omega)+\lambda\cot(\pi z)\psi(z)=E\,\psi(z),\quad z\in \C.
\end{equation}
\end{Th}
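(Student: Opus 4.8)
The plan is to factor the statement into two pieces: an elementary ``first recursion'' that identifies $\P_N$ as a ratio of values of the matrix $\Psi$, and a single nontrivial ``monodromy identity'' for the minimal meromorphic solution that generates the renormalized propagator. First I would record that $\Psi$ solves the matrix equation \eqref{matrix-Mary} in its continuous variable. Writing \eqref{eq:complex-Mary} at the points $z$ and $z-1$ and using $\cot(\pi(z-1))=\cot(\pi z)$ gives
\[
\psi(z+\omega)=a(z)\,\psi(z)-\psi(z-\omega),\qquad
\psi(z+\omega-1)=a(z)\,\psi(z-1)-\psi(z-1-\omega),
\]
where $a(z)=2\cosh l\cos\eta+2\sinh l\sin\eta\cot(\pi z)$ is the $(1,1)$-entry of $\F(z,\eta,l)$ in \eqref{eq:matrix-F}. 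These two scalar relations are exactly the two columns of the identity
\[
\Psi(z+\omega,\eta,l)=\F(z,\eta,l)\,\Psi(z,\eta,l).
\]
Since $\det\F\equiv1$, the matrix $\Psi(\theta+k\omega,\eta,l)\,\Psi^{-1}(\theta,\eta,l)$ satisfies the same recursion as $\P_k$ and equals the identity at $k=0$; hence $\P_N(\omega,\theta,\eta,l)=\Psi(\theta+N\omega,\eta,l)\,\Psi^{-1}(\theta,\eta,l)$ for every $N\in\Z$. In view of $\theta+N\omega=\{\theta+N\omega\}-N_1$ (from \eqref{eq_new_parameters_and_old}), the theorem \ref{th:main} becomes equivalent to
\[
\Psi(\{\theta+N\omega\}-N_1,\eta,l)
=\Psi(\{\theta+N\omega\},\eta,l)\,\sigma_2\,
\P_{N_1}(\omega_1,\theta_1,\eta_1,l_1)\,\sigma_2 .
\]

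The heart of the matter is the following \emph{monodromy identity}: the minimal meromorphic solution obeys the renormalized three-term relation
\[
\psi(z+1,\eta,l)+\psi(z-1,\eta,l)
=\Bigl(2\cosh\tfrac{l}{\omega}\cos\tfrac{\eta}{\omega}
+2\sinh\tfrac{l}{\omega}\sin\tfrac{\eta}{\omega}\cot\tfrac{\pi z}{\omega}\Bigr)\,\psi(z,\eta,l).
\]
Granting this, I would compute the $2\times2$ monodromy matrix $\Psi^{-1}(z)\,\Psi(z+1)$: its second column is the first column of $\Psi(z)$, so it has the form $\left(\begin{smallmatrix}* & 1\\ * & 0\end{smallmatrix}\right)$, and the identity above lets one read off the $(1,1)$-entry as the $(1,1)$-entry of $\F(z/\omega,\eta_1,l_1)$ and the $(2,1)$-entry as $-1$. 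Since a direct computation yields $\sigma_2\,\F(\zeta,\eta_1,l_1)^{-1}\sigma_2=\left(\begin{smallmatrix}a_1(\zeta)&1\\-1&0\end{smallmatrix}\right)$ with $a_1$ the $(1,1)$-entry of $\F(\cdot,\eta_1,l_1)$, this gives the matrix form
\[
\Psi(z+1,\eta,l)=\Psi(z,\eta,l)\,\sigma_2\,\F\bigl(z/\omega,\eta_1,l_1\bigr)^{-1}\,\sigma_2 .
\]
Iterating this $|N_1|$ times and reducing arguments modulo $1$ (legitimate because $\F(\cdot,\eta_1,l_1)$ depends on its argument only through $\cot$), the bookkeeping $\tfrac{\{\theta+N\omega\}+j}{\omega}\equiv\theta_1+(\text{integer})\,\omega_1\pmod1$, which follows from $\tfrac1\omega\equiv\omega_1$ and $\tfrac\theta\omega\equiv\theta_1\pmod1$, collapses the product of the factors $\F(\cdot,\eta_1,l_1)^{\pm1}$ exactly into $\P_{N_1}(\omega_1,\theta_1,\eta_1,l_1)$ as defined after \eqref{matrix-Mary}. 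This produces the displayed equivalent form, hence \eqref{eq:main_renormalization_formulae}. The two signs of $N_1$ are treated by iterating the relation, respectively its inverse, matching the $k\ge1$ and $k\le-1$ branches of the definition of $\P_k$.

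The main obstacle is the monodromy identity itself, which is where all the arithmetic of \eqref{eq_new_parameters_and_old} is generated: the substitution $z\mapsto z/\omega$ forces $\eta\mapsto\eta/\omega$ and $l\mapsto l/\omega$ in the coefficient. I would prove it from the explicit construction of the minimal meromorphic solution. The functions $z\mapsto\psi(z\pm1,\eta,l)$ are again solutions of \eqref{eq:complex-Mary} (because $\cot$ has period $1$), so $\psi(z+1)+\psi(z-1)$ is a solution and therefore an $\omega$-periodic linear combination of a fundamental system; the content of the identity is that this combination is the single multiple $a_1(z/\omega)\,\psi(z)$ with the renormalized coefficient. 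Pinning down the $\omega$-periodic factor requires the precise pole structure together with the growth/minimality that single out $\psi$ among all meromorphic solutions --- this is the analogue of the modular transformation law for Gaussian sums, and is the step that cannot be carried out by formal manipulation alone. A minor technical point, dispatched by the meromorphy of all quantities and the genericity of the zeros of $\det\Psi$, is that the identities are first established away from the poles of $\psi$ and the zeros of $\det\Psi(\theta,\eta,l)$ and then extended by analytic continuation.
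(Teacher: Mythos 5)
Your proposal is correct and follows essentially the same route as the paper: the identity $\P_N=\Psi(\theta+N\omega)\Psi^{-1}(\theta)$ is the paper's cocycle representation via the fundamental solution, the ``monodromy identity'' you isolate is exactly the paper's Theorem~\ref{th:second-eq} (the second difference equation $\psi(z+1)+\psi(z-1)+\lambda_1\cot(\pi z/\omega)\psi=E_1\psi$ satisfied by the minimal solution, proved there just as you sketch --- by expanding $\psi(\cdot-1)$ in the basis $\psi,\psi(\cdot+1)$ and pinning down the $\omega$-periodic coefficient through its poles and its limits as $\im z\to\pm\infty$ via the Wronskian Lemma~\ref{le:wron-psi-psi-plus-odin}), and your $\sigma_2$-conjugation and index bookkeeping reproduce the paper's general cocycle renormalization Theorem~\ref{th_renormalization_formula}, which you have simply inlined for the specific matrix $\F$. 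The only structural difference is cosmetic: the paper separates the abstract monodromization step from the Maryland-specific invariance, while you fuse them.
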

The importance of the minimal entire solutions, i.e., the solutions 
having the slowest possible growth for $\im z\to\pm \infty$,  for the 
study of difference equations with entire periodic coefficients was 
revealed in~\cite{BF:01}. For equation~\eqref{eq:complex-Mary}, the 
definition of the minimal meromorphic solution is formulated in 
Section~\ref{sec:min-sol}. In the same section, we find out that this 
solution satisfies one more complex Maryland equation with new  
parameters.  This is one of the  key observations leading  to the renormalization 
formula~\eqref{eq:main_renormalization_formulae}. The minimal solution 
is constructed in Section~\ref{sec:min-sol-constr}, where we obtain 
integral representations for it.

The above renormalization formula for the matrix product is as explicit as 
the renormalization formula obtained in~\cite{FK:12} for 
the Gaussian exponential sums.  These formulas have a similar structure.
In~\eqref{eq:main_renormalization_formulae},  $N_1\sim -\omega N$ for large $N$, 
and, as $0<\omega<1$, the analysis of the matrix product 
$\P_N(\omega,\theta,\eta,l)$ with a large number of factors is reduced to 
analysis of an analogous product with a smaller one. It is important to note 
that the factors $\Psi(\dots)$ in the right-hand side 
of~\eqref{eq:main_renormalization_formulae} have to be controlled only 
on the interval $[0,1)$. 

As in~\cite{FK:12}, one can easily show that, after a finite number 
(of order of $\log N$) of the renormalizations applied consequently to the 
matrix products  
$\P_N(\omega,\theta,\eta,l)$, $\P_{N_1}(\omega_1,\theta_1,\eta_1, l_1)$ 
etc., one can reduce the number of factors to one. In the analysis 
of the Gaussian sums, the main role was played by quasiclassical effects arising 
when the frequency $\omega_L=\left\{1/\omega_{L-1}\right\}$,  $L\ge 1$, 
$\omega_0=\omega$, is small. In the case of the Maryland equation,
there is an additional effect. It is well known that the product 
$\omega_0\omega_1\dots \omega_L$ exponentially decreases 
when $L$ grows. Therefore, after many renormalizations, one encounters 
the large parameter $l/(\omega_0\omega_1\dots\omega_L)$. So, one can expect
that the analysis of the behavior of $\P_N(\omega,\theta,\eta,l)$
for large $N$ can be very effective. We plan to employ this idea in our 
next publication.

Theorem~\ref{th:main} is obtained in Section~\ref{sec:renormalization}.
Its proof is based on monodromization method ideas. This method is a general 
renormalization approach suggested by V.~Buslaev and A.~Fedotov 
for studying difference equations on $\R$ with periodic coefficients. 
It was developed further  in papers of
A.~Fedotov and F.~Klopp, see the review article~\cite{F:13}. 
In Section~\ref{sec:renormalization}, we describe the monodromization idea 
and give a proof of a general renormalization formula for the case of 
difference equations on $\Z$  with coefficients being restrictions to $\Z$
of functions defined and periodic on $\R$. Note that a similar formula was 
stated without proof in~\cite{FK:08}. 
Formula~(\ref{eq:main_renormalization_formulae}) is  a 
corollary from the general one and from the observation that the 
minimal meromorphic solution of the complex Maryland  equation 
satisfies one more complex Maryland equation (with new parameters). 
This observation is equivalent to the fact that  {\it  the complex Maryland 
equation is invariant with respect to monodromization}. The reader 
finds more details in Section~\ref{sec:renormalization}.
\section{Minimal solutions}\label{sec:min-sol}
In this section, we discuss the difference equations on the complex plane 
only. Equation~(\ref{eq:complex-Mary}) is invariant with respect to 
multiplication by $e^{2\pi i z/\omega}$. Therefore,  if it has a meromorphic 
solution, it has meromorphic solutions growing as quickly as desired when
$\im z\to\pm \infty$. To define the minimal meromorphic solution,
i.e., the solution  having the slowest  growth for $\im z \to\pm \infty$,
one has to impose some natural conditions on the set of its poles.
To give the precise definition, we need to discuss  the set 
of solutions of~(\ref{eq:complex-Mary}).
\subsection{Solutions of  difference equations}
\label{sub:space}
Let us list well-known elementary properties of the solutions of the equation
\begin{equation}
   \label{eq:g-eq}
 \psi(z+\omega)+\psi(z-\omega)+v(z)\psi(z)=0, \quad z\in \C,  
\end{equation}
where $v$ is a given function, and $\omega>0$ is a given number.

Let $\psi$ and $\tilde\psi$ be two solutions  to~\eqref{eq:g-eq}.
It can be easily seen that the expression
\begin{equation}
  \label{eq:wron}
  w(\psi(z),\tilde \psi(z))= 
\psi(z)\tilde\psi(z-\omega)-\psi(z-\omega)\tilde\psi(z)
\end{equation}
is  $\omega$-periodic in $z$. It is called the Wronskian of 
$\psi$  and $\tilde\psi$.

If $ w(\psi(z),\tilde \psi(z))\ne0$ for all $z$, one can show that
any other solution  $\phi$ admits the representation
\begin{equation}
  \label{eq:bas}
  \phi(z)=a(z)\psi(z)+b(z)\tilde\psi(z),\quad z\in\C,
\end{equation}
with some $\omega$-periodic  $a$ and $b$.  This implies that the solution space 
of~\eqref{eq:g-eq} is a two-dimensional module over the ring of 
$\omega$-periodic functions. Note that~\eqref{eq:bas} and the Wronskian definition
imply that
\begin{equation}
  \label{eq:lin-comb-coef}
  a(z)=\frac{w(\phi(z),\tilde\psi(z))}{w(\psi(z),\tilde\psi(z))},\quad
  b(z)=\frac{w(\psi(z),\phi(z))}{w(\psi(z),\tilde\psi(z))}.
\end{equation}
\subsection{The simplest solutions to the complex Maryland equation
in a neighborhood of  $\pm i\infty$}\label{subsec:canonical-bases}
The periodicity of the potential in the  complex Maryland equation 
allows to consider   $+i\infty$ and  $-i\infty$ as two singular points. 
For  $Y\in\R$, we call the half-plane 
$\C_+(Y)=\{z\in\C\,:\im z>Y\}$ a neighborhood of
$+i\infty$, and we call  $\C_-(Y)=\{z\in\C\,:\im z<Y\}$ a neighborhood of 
$-i\infty$. 

The minimal meromorphic  solutions of the complex Maryland equation
are defined in terms of the solutions having the ``simplest'' behavior 
in neighborhoods of   $\pm i\infty$. 
The latter are described in
\begin{Th}\label{lm_bloch_existence} 
For sufficiently large $Y>0$, in $\C_+(Y)$, there exist analytic
solutions  $u_\pm$ to the complex Maryland equation such that
\begin{equation}
\label{eq_bloch_asymp}
u_\pm(z)=e^{\pm\frac{l-i\eta}{\omega}z} (1+o(1)),\quad \im z \to +i\infty,
\end{equation}
uniformly in  $z\in K_C=\{z\in\C\,:\, |\im z|\ge C |\re z|\}$,
where $C>0$ is an arbitrary fixed constant.
In the terminology of~\cite{BF:95},  $u_\pm$ are Bloch solutions,
i.e., $u_\pm(z+1)=\alpha_\pm(z)u_\pm(z)$ with some $\omega$-periodic 
factors $\alpha_\pm$.
\end{Th}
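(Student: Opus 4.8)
The plan is to treat \eqref{eq:complex-Mary} as an exponentially small, asymptotically constant perturbation of a hyperbolic constant-coefficient equation and to build $u_\pm$ by a discrete variation-of-parameters (Levinson-type) contraction. First I would pass to the first-order form $X(z+\omega)=A(z)X(z)$ with $X=(\psi(z),\psi(z-\omega))^{\top}$ and $A(z)=\F(z,\eta,l)$, so that $A(z)=A_\infty+R(z)$, where $A_\infty=\left(\begin{smallmatrix}E+i\lambda&-1\\1&0\end{smallmatrix}\right)$. Since $\cot(\pi z)+i=-2i\,e^{2\pi i z}/(1-e^{2\pi i z})$, one has $\|R(z)\|=O(e^{-2\pi\im z})$ uniformly in $\re z$ as $\im z\to+\infty$. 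Because $\det A_\infty=1$ and $\tr A_\infty=E+i\lambda=2\cos(\eta+il)$, the eigenvalues of $A_\infty$ are $e^{\pm(l-i\eta)}$, of moduli $e^{\pm l}$; as $l>0$ these differ in modulus, so $A_\infty$ is hyperbolic. The eigenvector for $e^{l-i\eta}$ is $(1,e^{-(l-i\eta)})^{\top}$, and it advances by the factor $e^{l-i\eta}=e^{\mu_+\omega}$ under $z\mapsto z+\omega$, which is precisely the per-step growth of $e^{\mu_+ z}$ with $\mu_+=(l-i\eta)/\omega$; the second eigenvector matches $e^{\mu_- z}$, $\mu_-=-\mu_+$. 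Thus $u_\pm$ should be the solutions tracking the two eigendirections of $A_\infty$.

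Next I would construct these solutions by a fixed point. Diagonalizing $A_\infty=T\Lambda T^{-1}$ and setting $X=TY$ gives $Y(z+\omega)=(\Lambda+\tilde R(z))Y(z)$, $\Lambda=\mathrm{diag}(e^{-l+i\eta},e^{l-i\eta})$, $\|\tilde R(z)\|=O(e^{-2\pi\im z})$, after which I would solve the summation equation
\[
Y(z)=\Lambda^{z/\omega}e_j+\sum_{n}\mathcal G_n\,\tilde R(z+n\omega)\,Y(z+n\omega),
\]
where $\mathcal G_n$ is the dichotomy Green's function that sums the contracting component forward ($n\ge1$) and the expanding component backward ($n\le0$) along the progression $z+\omega\Z$. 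The essential point is that, although $\im(z+n\omega)=\im z$ so $\tilde R$ does \emph{not} decay along the progression, the Green's function decays geometrically at rate $e^{-l}$ by hyperbolicity; hence $\sum_n\|\mathcal G_n\|<\infty$ and the map above has operator norm $O(e^{-2\pi Y})$ on $\{\im z>Y\}$, a contraction for $Y$ large. Its fixed point is analytic in $z$ (a uniformly convergent sum of analytic functions $\tilde R(\cdot+n\omega)$ with constant coefficients $\mathcal G_n$), yielding the analytic $u_\pm$ on $\C_+(Y)$. I would stress that this dichotomy-based summation is robust and uses only $l>0$: a naive Fourier inversion in $e^{2\pi i z}$ would require dividing by $D_m=e^{l-i\eta}(e^{2\pi i m\omega}-1)+e^{-(l-i\eta)}(e^{-2\pi i m\omega}-1)$, whose modulus can be as small as a constant times $\dist(m\omega,\Z)$, producing small divisors that destroy convergence for Liouville $\omega$; since the statement allows an arbitrary irrational $\omega$, the Fourier route is unavailable and the contraction is what one needs.

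From the fixed point one reads off $Y(z)=\Lambda^{z/\omega}e_j\bigl(1+O(e^{-2\pi\im z})\bigr)$ and hence $u_\pm(z)=e^{\mu_\pm z}\bigl(1+O(e^{-2\pi\im z})\bigr)$ with the error uniform in $\re z$; this gives \eqref{eq_bloch_asymp} not merely in $K_C$ but uniformly in the whole half-plane, which I would record for the Bloch step. For the Floquet property I would use that $\cot(\pi(z+1))=\cot(\pi z)$, so $u_\pm(\cdot+1)$ is again a solution with leading behaviour $e^{\mu_\pm}e^{\mu_\pm z}(1+o(1))$. Writing $e^{-\mu_\pm}u_\pm(z+1)=a(z)u_\pm(z)+b(z)u_\mp(z)$ with $\omega$-periodic $a,b$ by \eqref{eq:lin-comb-coef} and dividing by $u_\pm$, the ratio tends to $1$ uniformly; but $u_\mp/u_\pm$ blows up as $\re z\to-\infty$ (for $u_+$) resp. $\re z\to+\infty$ (for $u_-$), while $b$ is $\omega$-periodic, which forces $b\equiv0$. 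Hence $u_\pm(z+1)=\alpha_\pm(z)u_\pm(z)$ with $\alpha_\pm=e^{\mu_\pm}a$ an $\omega$-periodic factor, not necessarily constant, in agreement with the general Floquet structure.

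The hard part will be precisely the mismatch of directions: the transfer matrix propagates horizontally (a real shift by $\omega$), along which $R$ is only small but not decaying, whereas the asymptotics are claimed vertically, as $\im z\to+\infty$, where $R$ does decay. All convergence must therefore be extracted from the hyperbolic dichotomy rather than from decay along the orbit, and the quantitative smallness $\|R\|=O(e^{-2\pi\im z})$ is what simultaneously yields the contraction, the uniform error term, and analyticity. Verifying that the dichotomy Green's function is summable with the stated geometric rate and that the fixed point stays analytic across horizontal lines is where the real work lies; the hyperbolicity $l>0$ together with the uniformity of the bound on $R$ in $\re z$ are the two features that make everything go through for a general irrational $\omega$.
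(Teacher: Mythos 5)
Your first step (passing to $Y(z+\omega)=(\Lambda+\tilde R(z))Y(z)$ with $\Lambda=\mathrm{diag}(e^{-l+i\eta},e^{l-i\eta})$ and $\|\tilde R(z)\|=O(e^{-2\pi\im z})$) coincides with the paper's, but the dichotomy fixed point you build on it does not close. The asymptotics \eqref{eq_bloch_asymp} is a \emph{relative} statement, so the contraction must be run in the weighted norm $\sup\,|e^{\mp(l-i\eta)z/\omega}Y(z)|$: in the unweighted sup-norm the free term $\Lambda^{z/\omega}e_j$ is not even bounded along an orbit, and on a half-orbit where it is bounded the resulting estimate $Y=\Lambda^{z/\omega}e_j+O(e^{-2\pi Y})$ says nothing about the contracting solution, whose leading term is eventually much smaller than $O(e^{-2\pi Y})$. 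In the weighted norm the \emph{resonant} block of your Green's function is not summable: for the expanding solution the backward sum carries the kernel $e^{-(l-i\eta)(n-k)}\Lambda^{\,n-k-1}P_+$ ($P_+$ the projection onto the expanding eigendirection), whose norm equals the constant $e^{-l}$ for every $k\ge n$; since $\|\tilde R(z+k\omega)\|$ does not decay in $k$ (only in $\im z$), this series diverges, and symmetrically the forward resonant sum diverges for the contracting solution. This is exactly the $\ell^1$ hypothesis of Levinson-type theorems that your perturbation violates; it produces secular terms of size $|\re z|\,e^{-2\pi\im z}$ along horizontal orbits. A telling symptom is your claim that \eqref{eq_bloch_asymp} holds uniformly in the whole half-plane: it cannot, since iterating the Floquet factor $\alpha_\pm(z)=e^{\pm(l-i\eta)/\omega}\bigl(1+O(e^{-2\pi\im z/\omega})\bigr)$ horizontally accumulates a relative error $\bigl(1+O(e^{-2\pi\im z/\omega})\bigr)^{|\re z|}$, unbounded at fixed height (and the oscillating part gives ergodic sums that are unbounded for Liouville $\omega$). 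The restriction to the cones $K_C$ in the statement is precisely what keeps this accumulation $o(1)$.

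The paper avoids orbit sums altogether. After the same diagonalization it passes to the Riccati equation for the ratio $\Phi=\phi_2/\phi_1$, solved by a pointwise iteration that contracts at rate $|\nu|^{-2}=e^{-2l}$ --- hyperbolicity enters here, with no summation along $z+\omega\Z$ --- and the resonant direction is then treated exactly by solving the scalar first-order equation for $\phi_1$, i.e.\ the cohomological equation $f(z+\omega)-f(z)=g(z)$ with $g$ being $1$-periodic, analytic near $+i\infty$ and vanishing there (Lemma~2.3 of~\cite{BF:01}, a contour-integral construction valid for every irrational $\omega$). You rightly note that naive Fourier inversion fails for Liouville $\omega$; your own scheme fails at the same spot for the same underlying reason. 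The Bloch property then holds by construction ($\Phi$ is $1$-periodic and $f(z+1)-f(z)$ is $\omega$-periodic); your a posteriori derivation of it, forcing $b\equiv0$ from the blow-up of $u_\mp/u_\pm$ as $\re z\to\pm\infty$ at fixed height, uses exactly the regime where you have no control, and in any case an $\omega$-periodic analytic function tending to $0$ as $\im z\to+\infty$ need not vanish identically. Finally, the uniformity in $K_C$ is obtained in the paper by proving the asymptotics for bounded $|\re z|$ and then multiplying at most $\im z/(C\omega)$ Floquet factors.
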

This theorem is proved in Section~\ref{subsec:bloch-sol-exist}. 

In a neighborhood of $-i\infty$, one can construct solutions $d_\pm$  
similar to $u_\pm$. It is convenient to define them by the formulas  
\begin{equation}\label{eq_bloch_down_from_up}
d_\pm(z)=\overline{u_\pm(\overline{z})}.
\end{equation}

Representations~(\ref{eq_bloch_asymp}) imply that
\begin{equation}
  \label{eq:bloch-sol-wron}
  w(u_+,u_-)=  e^{l-i\eta}-e^{-l+i\eta}+o(1),\quad \im z\to+\infty.
\end{equation}
Therefore, for sufficiently large  $Y$, the solutions $u_\pm$ form a basis for 
the space of the solutions defined on $\C_+(Y)$. Similarly,  $d_\pm$ 
form a basis for the space of the solutions defined on $\C_-(-Y)$.
We call the couples $(u_\pm)$ and $(d_\pm)$ the {\it canonical bases} for 
neighborhoods of  $+i \infty$ and $-i\infty$, respectively.

We define  $\alpha_\pm(z)=u_\pm(z+1)/u_\pm(z)$ and  
$\beta_\pm(z)=d_\pm(z+1)/d_\pm(z)$.
The functions  $\alpha_\pm$ and $\beta_\pm$ are $\omega$-periodic. 
Representations~(\ref{eq_bloch_asymp}) and~(\ref{eq_bloch_down_from_up}) 
imply that
\begin{gather}
  \label{u:floquet-factors}
  \alpha_\pm(z)=e^{\pm\frac{l-i\eta}{\omega}} (1+o(1)),\quad \im z \to +i\infty,\\
  \label{d:floquet-factors}
  \beta_\pm(z)=e^{\pm\frac{l+i\eta}{\omega}} (1+o(1)),\quad \im z \to -i\infty.
\end{gather}
\subsection{Minimal meromorphic solution of the complex 
Maryland equation} 
Let $\psi$ be a solution of~(\ref{eq:complex-Mary}) analytic in the strip
$S_0=\{z\in\C:\, |\re z|\le \omega\}$. 
\begin{Rem}\label{min-sol-analit} Equation~(\ref{eq:complex-Mary}) implies
that  $\psi$ can be continued to a  meromorphic function that can have 
poles only at the points $\pm(n+m\omega)$, $n,m\in\N$. Moreover,
the poles located at the points $\pm(1+\omega)$ are simple. For this new function,
we keep the old notation $\psi$.
\end{Rem}
Let $Y$ be chosen as in Theorem~\ref{lm_bloch_existence}. The solution 
$\psi$ admits the representations
\begin{gather}\label{psi-up}
 \psi(z)=A_+(z)\,u_+(z)+A_-(z)\,u_-(z),\quad z\in \C_+(Y),
\\
\label{psi-down}
\psi(z)=B_+(z)\,d_+(z)+B_-(z)\,d_-(z),\quad z\in C_-(Y),
\end{gather}
with some  $\omega$-periodic analytic coefficients $A_\pm$  and $B_\pm$.
\begin{Def}
The solution $\psi$ is called a {\it minimal meromorphic solution} 
to~\eqref{eq:complex-Mary} if the coefficients  $A_\pm$ and $B_\pm$ 
are bounded in $\C_\pm(Y)$.
\end{Def}
For a minimal meromorphic solution $\psi$, the limits  $a_{\pm}$  of 
$A_\pm$ for $\im z\to +\infty$ and the limits $b_{\pm}$ of $B_\pm$ for 
$\im z\to-\infty$ exist and are equal to the zeroth Fourier 
coefficients of  $A_\pm$ and $B_\pm$, respectively. We call  
$a_\pm$ and  $b_\pm$ the asymptotic coefficients of the minimal 
solution $\psi$.

In Section~\ref{sec:min-sol-constr}, we prove 
\begin{Th}\label{th:existence-light}
For $|\eta|<\pi(1+\omega)$, there exists a minimal meromorphic solution $\psi$
of the complex Maryland equation. It  is analytic in $\eta$, and its asymptotic 
coefficients do not vanish at $\eta\not\in \omega\Z$.
\end{Th}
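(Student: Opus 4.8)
The plan is to \emph{construct} the minimal solution explicitly and then read its three properties off the construction; this is the substance of Section~\ref{sec:min-sol-constr}, and I describe here the strategy. First I would produce a solution $\psi$ of \eqref{eq:complex-Mary} analytic in the strip $S_0$ by means of an integral (Laplace/Fourier-type) representation. The natural device is the expansion of the potential in $\C_+(Y)$ as $\lambda\cot(\pi z)=-i\lambda-2i\lambda\sum_{n\ge1}e^{2\pi i n z}$, i.e.\ as its value $-i\lambda$ at $+i\infty$ plus an exponentially small correction, so that \eqref{eq:complex-Mary} becomes a small perturbation of the constant-coefficient equation $\psi(z+\omega)+\psi(z-\omega)=2\cos(\eta+il)\psi(z)$, whose solutions are the exponentials $e^{\pm\frac{l-i\eta}{\omega}z}$ appearing in \eqref{eq_bloch_asymp}. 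The integral representation is to be designed so that the difference equation \eqref{eq:complex-Mary} is converted into an explicitly solvable auxiliary equation for the transform; the analyticity of $\psi$ in $S_0$ and the admissible pole locations $\pm(n+m\omega)$ then follow from Remark~\ref{min-sol-analit} together with the pole structure of the kernel. The restriction $|\eta|<\pi(1+\omega)$ should be exactly the condition guaranteeing convergence of the representation and the correct (slowest) exponential order.

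Next I would establish minimality, i.e.\ boundedness of the coefficients $A_\pm$, $B_\pm$ in \eqref{psi-up}--\eqref{psi-down}. Using Theorem~\ref{lm_bloch_existence} and \eqref{eq:lin-comb-coef} with the basis $(u_+,u_-)$, one has $A_+=w(\psi,u_-)/w(u_+,u_-)$ and $A_-=w(u_+,\psi)/w(u_+,u_-)$, and symmetrically for $B_\pm$ relative to $(d_+,d_-)$ via \eqref{eq_bloch_down_from_up}. By \eqref{eq:bloch-sol-wron} the denominator tends to the nonzero constant $e^{l-i\eta}-e^{-l+i\eta}=2\sinh(l-i\eta)$, so boundedness of $A_\pm$ reduces to boundedness of the Wronskians of $\psi$ with the canonical bases, which I would estimate directly from the integral representation. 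The hypothesis $|\eta|<\pi(1+\omega)$ enters here once more, keeping the exponential weights in balance so that no coefficient grows as $\im z\to\pm\infty$.

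Analyticity in $\eta$ is the softest step: the kernel depends analytically on $\eta$ through \eqref{eq_Gamma_l_E_lambda}, and the convergence of the representation is locally uniform in $\eta$, so $\psi$—and hence its asymptotic coefficients $a_\pm$, $b_\pm$, the zeroth Fourier coefficients of $A_\pm$, $B_\pm$—are analytic in $\eta$ by differentiation under the integral sign (or Morera). For the last assertion I would compute $a_\pm$ and $b_\pm$ in closed form: passing to the limit $\im z\to\pm\infty$ in the Wronskian formulas above turns the numerators into explicit elementary expressions, and one checks that they factor through a quantity (expected to be of the form $\sin(\pi\eta/\omega)$) whose only zeros in the admissible range $|\eta|<\pi(1+\omega)$ are the resonant points $\eta\in\omega\Z$, where the construction degenerates.

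I expect the main obstacle to be the pair of quantitative steps: the uniform boundedness of $A_\pm,B_\pm$ throughout the half-planes $\C_\pm(Y)$—not merely along the imaginary axis or inside a cone $K_C$—while the poles $\pm(n+m\omega)$ accumulate toward the real axis, and the evaluation of $a_\pm,b_\pm$ sharply enough to identify their zero set as exactly $\omega\Z$. Controlling the contour integral uniformly in $z$ near these poles, and matching the two asymptotic regimes at $\pm i\infty$ across the strip, is the technical heart of the argument.
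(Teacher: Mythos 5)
Your overall architecture matches the paper's: an explicit integral representation for the solution, minimality read off by expressing $A_\pm=\pm w(\psi,u_\mp)/w(u_+,u_-)$ and passing to the limit, analyticity in $\eta$ from the representation, and non\-vanishing of $a_\pm,b_\pm$ from a closed-form evaluation. But there is a genuine gap at the central step: the kernel of the integral representation is not ``to be designed'' as a routine matter --- it is the whole content of the construction. The paper takes
$\Y(z)=\sin(\pi z)\sin(\pi z/\omega)\int_{\gamma(z)}e^{ipz/\omega}\,\frac{\sigma(p+\eta-il)\,\sigma(p-\eta+il)}{\sigma(p-\eta-il)\,\sigma(p+\eta+il)}\,dp$,
where $\sigma$ is the special function of Section~\ref{sect_sigma_function} satisfying the \emph{two} difference equations~\eqref{eq_sigma_main_equation} and~\eqref{eq_sigma_second_equation}; it is exactly these functional equations that turn the contour shift $\gamma\mapsto\gamma\pm\pi\omega$ into an identity and make $\Y$ a solution, and the explicit zero/pole set of $\sigma$ is what yields the closed form~\eqref{eq:Y_i_infty_leading_coefs} for $a_\pm$ and hence the statement about $\eta\notin\omega\Z$. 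Without producing this kernel (or an equivalent special function), your argument does not get off the ground. Moreover, the device you propose as ``natural'' --- expanding $\lambda\cot(\pi z)$ at $+i\infty$ and perturbing the constant-coefficient equation --- is precisely how the paper constructs the \emph{local} Bloch solutions $u_\pm$ of Theorem~\ref{lm_bloch_existence} in a half-plane $\C_+(Y)$; it cannot by itself produce a solution analytic in the strip $S_0$ (which contains the poles of the potential) and simultaneously of minimal growth at \emph{both} $+i\infty$ and $-i\infty$. Matching the two ends across the real axis is a connection problem, and that is what the theorem is really about.

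Two further points. First, you cannot invoke Remark~\ref{min-sol-analit} to obtain analyticity in $S_0$: that remark \emph{presupposes} analyticity in $S_0$ and only governs the subsequent meromorphic continuation. In the paper, analyticity near $\R$ is a separate nontrivial step (Proposition~\ref{lm:Y_continuation_real}): the contour integral~\eqref{eq_Y_integral_representation} degenerates as $z$ approaches $\R$ (the contour $\gamma(z)$ depends on $\arg z$ and convergence is lost), and one must regularize by splitting $e^{ipz/\omega}\sin(\pi z)\sin(\pi z/\omega)$ into shifted exponentials, translating contours, collecting residues, and again using both $\sigma$-equations. Second, the uniformity of $A_\pm$ over the whole half-plane, which you single out as the technical heart, is in fact disposed of cheaply: $A_\pm$ are $\omega$-periodic, so it suffices to identify their limits on a single line (inside a cone, near the direction where both exponentials $e^{\pm(l-i\eta)z/\omega}$ are comparable), and periodicity propagates the bound to all of $\C_+(Y)$. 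The accumulation of poles toward $\R$ is irrelevant there, since the representation~\eqref{psi-up} is only used for $\im z$ large.
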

\begin{Rem} {\sl The minimal meromorphic solution described in this theorem
can be continued to a meromorphic function of $\eta$ and $l$.}
\end{Rem}
The term ``minimal'' is explained by
\begin{Th}\label{th:min-light} Let $\psi$ be a minimal meromorphic solution, 
and let its asymptotic coefficients $a_\pm$ (or $b_\pm$) be non-zero.
Then
\point any other minimal solution coincides with $\psi$ up to a constant factor;
\point if $\phi$ is a minimal solution, and one of its asymptotic coefficients 
is zero, then $\phi\equiv 0$.
\end{Th}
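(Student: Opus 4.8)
The plan is to prove the two assertions in the order dictated by their logical dependence: first establish that the minimal solutions form a one--dimensional complex vector space, and then read off both bullets. Once uniqueness up to a scalar is known, the second bullet is immediate: if $\phi=c\,\psi$ with $c\in\C$ and some asymptotic coefficient of $\phi$ vanishes while the corresponding coefficient of $\psi$ does not, then $c=0$ and $\phi\equiv0$. To guarantee that \emph{all four} coefficients of $\psi$ are non-zero (so that any single vanishing coefficient of $\phi$ is covered by the hypothesis ``$a_\pm$ (or $b_\pm$) non-zero''), I would use the conjugate solution $\psi^*(z):=\overline{\psi(\bar z)}$. By \eqref{eq_bloch_down_from_up} it is again a minimal solution, and comparing \eqref{psi-up} written for $\psi^*$ with \eqref{psi-down} written for $\psi$ gives $a_\pm(\psi^*)=\overline{b_\pm(\psi)}$ and $b_\pm(\psi^*)=\overline{a_\pm(\psi)}$; the first bullet then yields $\psi^*=c\,\psi$, whence $a_\pm(\psi)\neq0\iff b_\pm(\psi)\neq0$. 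Thus all four coefficients are non-zero and the second bullet reduces to the first.

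The engine for uniqueness is that \emph{the Wronskian of two minimal solutions is constant}. Let $\psi,\tilde\psi$ be minimal and set $W=w(\psi,\tilde\psi)$, which is $\omega$-periodic by \eqref{eq:wron}. First, $W$ extends to an entire function: by Remark~\ref{min-sol-analit} the only possible poles are the real points $\pm(n+m\omega)$, and there the principal parts of $\psi$ and $\tilde\psi$ are proportional, because the singularities are produced solely by the $\cot(\pi z)$ term in \eqref{eq:complex-Mary}. At the simple pole $z=1+\omega$, for example, the equation forces
\[
\operatorname{Res}_{1+\omega}\psi=-\tfrac{\lambda}{\pi}\,\psi(1),\qquad \operatorname{Res}_{1+\omega}\tilde\psi=-\tfrac{\lambda}{\pi}\,\tilde\psi(1),
\]
so the residue of $W=\psi(z)\tilde\psi(z-\omega)-\psi(z-\omega)\tilde\psi(z)$ at $1+\omega$ cancels; the remaining poles I would treat by induction on $n+m$. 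Next, $W$ is bounded: substituting \eqref{psi-up} and using \eqref{eq:bloch-sol-wron} gives $W=(A_+\tilde A_--A_-\tilde A_+)\,w(u_+,u_-)$ in $\C_+(Y)$, a product of bounded functions with a finite limit at $+i\infty$, and symmetrically at $-i\infty$ via \eqref{psi-down}. Being entire, $\omega$-periodic and continuous on a period, $W$ is bounded on all of $\C$, and an entire bounded $\omega$-periodic function is constant.

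Constancy first yields dimension at most two, which I would then collapse to one. If $\psi,\tilde\psi$ are minimal with $W=w(\psi,\tilde\psi)\neq0$, then by \eqref{eq:lin-comb-coef} every minimal $\phi$ is $\phi=a\,\psi+b\,\tilde\psi$ with $a=w(\phi,\tilde\psi)/W$ and $b=w(\psi,\phi)/W$; both numerators are constants by the previous step, so $a,b\in\C$ and the minimal solutions form a space of dimension at most two. It remains to exclude dimension two, i.e.\ to show $W\equiv0$ for any two minimal solutions; proportionality $\tilde\psi=c\,\psi$ with $\omega$-periodic $c$ then follows, and boundedness of the coefficients of both solutions together with $a_\pm,b_\pm\neq0$ forces $c$ to be a constant. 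Evaluating the constant $W$ at the two ends, from \eqref{psi-up} and \eqref{psi-down}, gives
\[
\bigl(a_+(\psi)a_-(\tilde\psi)-a_-(\psi)a_+(\tilde\psi)\bigr)\Delta=\bigl(b_+(\psi)b_-(\tilde\psi)-b_-(\psi)b_+(\tilde\psi)\bigr)\overline{\Delta},\qquad \Delta=e^{\,l-i\eta}-e^{-l+i\eta},
\]
and I would combine this with the conjugation symmetry $a_\pm(\psi^*)=\overline{b_\pm(\psi)}$ and the non-vanishing of the asymptotic coefficients to force the asymptotic vectors to be proportional, hence $W=0$.

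The main obstacle is precisely this last step---ruling out a second minimal solution independent of $\psi$. After normalizing it to be recessive at $+i\infty$ (by subtracting a multiple of $\psi$), the conjugation symmetry makes it recessive at $-i\infty$ as well, so the real difficulty is to show that \emph{a minimal solution decaying at both $+i\infty$ and $-i\infty$ must vanish}. This cannot be detected at either singular point separately, since decay at one end leaves an entire recessive direction available; the two ends must be coupled, through the meromorphic continuation across the pole-free strip $S_0$ and the residue relations at $\pm(n+m\omega)$, and this is where I expect the genuine work to lie---and where, if need be, the explicit integral representation of the minimal solution from Section~\ref{sec:min-sol-constr} would be invoked. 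A secondary, purely technical point is the residue cancellation at the higher-order poles, which I would settle by the induction indicated above, showing that at every $\pm(n+m\omega)$ the principal part of any solution is a fixed function times the scalar value carried in from $S_0$.
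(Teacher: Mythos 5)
Your setup is largely sound, and parts of it are even heavier than necessary: the constancy of $w(\psi,\phi)$ for two minimal solutions does not need your residue-cancellation induction at the points $\pm(n+m\omega)$, since both solutions are analytic in $S_0$, so the Wronskian is analytic on the full period strip $0\le\re z\le\omega$, hence (by $\omega$-periodicity) entire, and then bounded by the representations \eqref{psi-up}, \eqref{psi-down} and therefore constant. The genuine gap is exactly the step you yourself flag as ``the main obstacle'': you never prove that two minimal solutions are proportional, i.e.\ that the constant $W=w(\psi,\phi)$ vanishes. The single scalar relation you write down between the cross-products $a_+(\psi)a_-(\tilde\psi)-a_-(\psi)a_+(\tilde\psi)$ and $b_+(\psi)b_-(\tilde\psi)-b_-(\psi)b_+(\tilde\psi)$ cannot by itself force either side to vanish, and the conjugation symmetry does not rescue this; so the first bullet, and with it the whole theorem, is left unproved.

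The missing idea is elementary and local, and requires none of the global coupling you anticipate: since $\lambda>0$ and $\cot(\pi z)$ has a pole at $z=0$ while the terms $\psi(\pm\omega)$ and $E\psi(0)$ are finite for a solution analytic in $S_0$, equation \eqref{eq:complex-Mary} forces \emph{every} such solution to vanish at $z=0$. Hence $\left.w(\phi,\psi)\right|_{z=0}=\phi(0)\psi(-\omega)-\phi(-\omega)\psi(0)=0$, and by constancy $w(\phi,\psi)\equiv0$. The paper then works in the basis $\{\psi(\cdot),\psi(\cdot+1)\}$, whose Wronskian is the explicit non-zero constant of Lemma~\ref{le:wron-psi-psi-plus-odin} when $a_\pm\neq0$, writes $\phi=a\,\psi+b\,\psi(\cdot+1)$ via \eqref{eq:lin-comb-coef}, and reads off $b\equiv0$ and $a$ constant. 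For the second bullet, your detour through the conjugate solution $\psi^*(z)=\overline{\psi(\bar z)}$ would work once the first bullet is in place, but the identity $a_+a_-(\dots)=b_+b_-(\dots)$ of Lemma~\ref{le:wron-psi-psi-plus-odin} already gives all four asymptotic coefficients non-zero directly, which is simpler.
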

When proving this theorem, we use
\begin{Le}\label{le:wron-psi-psi-plus-odin}
Let  $\psi$  be a minimal solution. Then
\begin{eqnarray*}
w(\psi(z+1),\psi(z))&=a_{+}a_{-}\,
\left(e^{\frac{l-i\eta}{\omega}}-e^{-\frac{l-i\eta}{\omega}}\right)
\left(e^{l-i\eta}-e^{-l+i\eta}\right)\\
&=b_{+}b_{-}\,\left(e^{\frac{l+i\eta}{\omega}}-e^{-\frac{l+i\eta}{\omega}}\right)
\left(e^{l+i\eta}-e^{-l-i\eta}\right).
\end{eqnarray*}
\end{Le}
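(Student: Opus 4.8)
The plan is to compute the $\omega$-periodic Wronskian $W(z):=w(\psi(z+1),\psi(z))$ at its two ends $\im z\to\pm\infty$ using the canonical bases, and to argue that $W$ is in fact constant, so that the two computations deliver the two claimed expressions simultaneously. First I would record the elementary fact that $\psi(z+1)$ is again a solution of \eqref{eq:complex-Mary}: substituting $z+1$ for $z$ leaves the equation unchanged because $\cot(\pi z)$ is $1$-periodic. Consequently $W(z)$, being the Wronskian \eqref{eq:wron} of two solutions, is $\omega$-periodic in $z$.

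Next I would evaluate the limit of $W$ as $\im z\to+\infty$. Inserting the expansion \eqref{psi-up}, $\psi=A_+u_++A_-u_-$, and using $u_\pm(z+1)=\alpha_\pm(z)u_\pm(z)$ with $\alpha_\pm$ being $\omega$-periodic, I write $\psi(z+1)=A_+(z+1)\alpha_+u_++A_-(z+1)\alpha_-u_-$. Since all the coefficients $A_\pm,\alpha_\pm$ are $\omega$-periodic, they take equal values at $z$ and $z-\omega$ and hence factor through the bilinear, antisymmetric form \eqref{eq:wron}; this collapses $W$ to $\bigl(A_+(z+1)\alpha_+A_--A_-(z+1)\alpha_-A_+\bigr)\,w(u_+,u_-)$. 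Passing to the limit with $A_\pm(z+1),A_\pm\to a_\pm$, the Floquet asymptotics \eqref{u:floquet-factors} ($\alpha_\pm\to e^{\pm(l-i\eta)/\omega}$) and \eqref{eq:bloch-sol-wron} ($w(u_+,u_-)\to e^{l-i\eta}-e^{-l+i\eta}$) produce exactly the first line of the lemma. The limit as $\im z\to-\infty$ is obtained identically from \eqref{psi-down} in the basis $(d_\pm)$, using \eqref{d:floquet-factors} and the relation $w(d_+,d_-)(z)=\overline{w(u_+,u_-)(\bar z)}\to e^{l+i\eta}-e^{-l-i\eta}$, which follows from $d_\pm(z)=\overline{u_\pm(\bar z)}$; this yields the second line.

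It remains to show that $W$ is constant, which I expect to be the main obstacle. The two computations above already show that $W$ is bounded with finite limits in $\C_+(Y)$ and in $\C_-(-Y)$ (each factor is an $\omega$-periodic analytic function possessing a finite boundary value at the relevant end, hence bounded there). Viewing $W$ on the cylinder $\C/\omega\Z$ through $q=e^{2\pi iz/\omega}$, these limits make the singularities at $q=0$ and $q=\infty$ removable, so the only remaining obstruction to constancy is the poles of $W$ inherited from the poles of $\psi$ described in Remark~\ref{min-sol-analit}. I would dispose of these by using $\omega$-periodicity to restrict attention to one vertical strip of width $\omega$ and checking, via the analyticity of $\psi$ in $S_0$ and the explicit pole locations $\pm(n+m\omega)$, that the four shifted factors entering $W$ leave no uncancelled pole in that strip. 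Once $W$ is known to be entire, $\omega$-periodic, and bounded, it descends to a bounded holomorphic function of $q\in\C\setminus\{0\}$ with removable singularities at $0$ and $\infty$, hence a constant by Liouville's theorem; being constant, it equals both of its end limits, which establishes the two equalities at once.
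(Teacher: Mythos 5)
Your proposal is correct and follows essentially the same route as the paper: compute the two limits of the $\omega$-periodic Wronskian at $\pm i\infty$ via the canonical bases and Floquet factors, show it is entire (the paper simply notes it is analytic on the strip $-1<\re z<\omega$ of width exceeding $\omega$, so periodicity gives entirety with no pole cancellation needed), and conclude by Liouville that it is constant and equals both limits.
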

\begin{proof} Remark~\ref{min-sol-analit} implies that the Wronskian  
of $\psi$  and $\psi(\,.\,+1)$ is analytic in the strip 
$\{z\in\C:\,-1<\re z<\omega\}$. As the Wronskian is  $\omega$-periodic, 
it is an entire function.

Let us study  the behavior of the Wronskian for $\im z\to +\infty$. 
Recall that  $u_\pm$ are Bloch solutions, $u_\pm(z+1)=\alpha_\pm(z) u_\pm(z)$,
where $\alpha_\pm$ are $\omega$-periodic. By means of~(\ref{psi-up}), 
we get
\begin{equation*}
  \psi(z+1)=A_+(z+1)\alpha_+(z)u_+(z)+A_-(z+1)\alpha_-(z)u_-(z).
\end{equation*}
From this formula, representation~(\ref{psi-up}) for $\psi$,
and the $\omega$-periodicity of  $\alpha_\pm$ and $A_\pm$, we deduce that
\begin{equation}\label{eq:aux1}
\begin{split}
 w(\psi(z+1),\,&\psi(z))=w(u_+(z),u_-(z))\,\cdot \\
&\cdot(A_{+}(z+1)A_{-}(z)\,\alpha_+(z)-A_{-}(z+1)A_{+}(z)\,\alpha_-(z)).
\end{split}
\end{equation}
Finally, the asymptotics~(\ref{eq:bloch-sol-wron}), the definition of 
the asymptotic coefficients, and~(\ref{u:floquet-factors}) imply that, 
as $\im z\to +\infty$, the right-hand side in~(\ref{eq:aux1}) tends to the 
first expression for the Wronskian given in 
Lemma~\ref{le:wron-psi-psi-plus-odin} .

One  similarly proves that, as $\im z\to-\infty$, \ $w(\psi(z+1),\,\psi(z))$ 
tends  to the second expression for the Wronskian described in this lemma.

As the Wronskian is an entire periodic function that has finite limits
as $\im z\to\pm\infty$, it is bounded. Being a bounded entire function, 
the Wronskian is independent of  $z$, and 
$$ w(\psi(z+1),\,\psi(z))=\lim_{\im z\to-\infty}w(\psi(z+1),\,\psi(z))=
\lim_{\im z\to+\infty}w(\psi(z+1),\,\psi(z)).$$ 
This leads to the statement of the lemma.
\end{proof}
\smallskip
Let us turn to the proof of Theorem~\ref{th:min-light}. 
\begin{proof}
Let $\phi$ be one more minimal meromorphic solution of the complex Maryland 
equation. Assume that the asymptotic coefficients  $a_\pm$ of $\psi$ 
are non-zero (the case of $b_\pm\ne 0$ is treated in the same way). The above 
lemma implies that the solutions $z\to\psi(z)$ 
and $z\to\psi(z+1)$ form a basis for the space of solutions of the complex 
Maryland equation. Therefore,  $\phi$ admits representation~(\ref{eq:bas}) with 
$\tilde \psi(z)=\psi(z+1)$. Recall that the coefficients in this 
representation are described in~(\ref{eq:lin-comb-coef}). 
As when proving Lemma~\ref{le:wron-psi-psi-plus-odin}, 
one shows that the Wronskians   $w(\phi(z),\tilde\psi(z))$ and 
$w(\psi(z),\phi(z))$ in~(\ref{eq:lin-comb-coef}) are independent 
of $z$. As $w(\psi(z+1),\psi(z))$ is also independent of $z$, see 
Lemma~\ref{le:wron-psi-psi-plus-odin}, to check the first 
statement of the theorem, it suffices to check that 
$w(\phi(z),\psi(z))$ vanishes at some $z$. Since $\psi$ and $\phi$ 
are analytic in  $\{z\in\C:\,|\re z|\le \omega\}$, the 
complex Maryland equation implies that both these solutions vanish 
at  $z=0$. Therefore, $\left.w(\phi(z),\psi(z))\right|_{z=0}=0$. 
This completes the proof of the first statement of the theorem.

Let us prove the second one. Lemma~\ref{le:wron-psi-psi-plus-odin} shows that
all the asymptotic coefficients of  $\psi$ are non-zero.
By the first statement, $\phi=C\,\psi$ with a constant  $C$. 
Therefore, asymptotic coefficients of the solutions $\phi$ and $\psi$
are proportional with the same constant $C$. As one of the asymptotic 
coefficients of $\phi$ is zero, we have $C=0$. Thus, $\phi=0$.
\end{proof}
\subsection{Second difference equation for the minimal solutions}
A central property of the minimal solutions is described by
\begin{Th}\label{th:second-eq}
Let $\psi$ be a minimal meromorphic solution of the complex Maryland 
equation. Assume that its asymptotic coefficients are non-zero. Then 
it solves the equation
\begin{gather}
  \label{eq:senond-eq}
  \psi(z+1)+\psi(z-1)+\lambda_1\cot(\pi z/\omega)\psi(z)=E_1\psi(z),\quad z\in\C,\\
\intertext{where}
\label{eq:lambda1-E1}
\lambda_1=-2\sin \eta_1{\rm sh}\, l_1,\quad E_1=2\cos \eta_1{\rm ch}\, l_1,
\end{gather}
and  $l_1$ and $\eta_1$ are related to  $l$ and $\eta$ by the formulas 
in~(\ref{eq_new_parameters_and_old}).
\end{Th}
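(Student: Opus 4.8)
The plan is to show that $\psi(z+1)+\psi(z-1)=a(z)\,\psi(z)$ for an $\omega$-periodic function $a$, and then to identify $a$ with $E_1-\lambda_1\cot(\pi z/\omega)$. First I note that the potential $\lambda\cot(\pi z)$ in \eqref{eq:complex-Mary} is $1$-periodic, so both $z\mapsto\psi(z+1)$ and $z\mapsto\psi(z-1)$ again solve the complex Maryland equation, and multiplication by any $\omega$-periodic function maps solutions to solutions. Since the asymptotic coefficients of $\psi$ are assumed non-zero, Lemma~\ref{le:wron-psi-psi-plus-odin} shows that $w(\psi(z+1),\psi(z))$ is a non-zero constant $W$; hence $\{\psi(z),\psi(z-1)\}$ is a basis of the solution module. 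Writing $\psi(z+1)=a(z)\psi(z)+c(z)\psi(z-1)$ and using \eqref{eq:lin-comb-coef} together with the constancy and $\omega$-periodicity of the Wronskians involved, I get $c\equiv-1$ and
\[ a(z)=\frac{w(\psi(z+1),\psi(z-1))}{w(\psi(z),\psi(z-1))}=\frac1W\,w(\psi(z+1),\psi(z-1)), \]
an $\omega$-periodic meromorphic function. Thus the theorem reduces to the identity $a(z)\equiv E_1-\lambda_1\cot(\pi z/\omega)$.

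To prove this identity I would regard both sides as rational functions of $w=e^{2\pi i z/\omega}$ on the Riemann sphere $\widehat\C$ (the ends $\im z\to\pm\infty$ corresponding to $w=0$ and $w=\infty$) and match their singularities and boundary values. For the boundary values I substitute the Bloch representation \eqref{psi-up} into the numerator Wronskian and use the Floquet relations \eqref{u:floquet-factors} and \eqref{eq:bloch-sol-wron}: as $\im z\to+\infty$ each Bloch mode contributes the factor $\alpha_\pm+\alpha_\pm^{-1}\to e^{(l-i\eta)/\omega}+e^{-(l-i\eta)/\omega}=2\cosh\!\big((l-i\eta)/\omega\big)$, whence $a(z)\to E_1+i\lambda_1$; the analogous computation at $-i\infty$ (using $d_\pm$ and \eqref{d:floquet-factors}) gives $a(z)\to E_1-i\lambda_1$. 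These are exactly the limits of $E_1-\lambda_1\cot(\pi z/\omega)$, and the reader checks via \eqref{eq_Gamma_l_E_lambda} and \eqref{eq:lambda1-E1} that $2\cosh((l\mp i\eta)/\omega)=E_1\pm i\lambda_1$. Consequently $g:=a-\big(E_1-\lambda_1\cot(\pi z/\omega)\big)$ extends holomorphically across $w=0$ and $w=\infty$ with value $0$ there.

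The decisive point is the pole count. Because the denominator $W$ is a non-zero constant, the zeros of $\psi$ produce no poles of $a$; the only poles of $a$ are inherited from the poles of $\psi$ evaluated at the shifted arguments $z\pm1$ and $z\pm1\mp\omega$ occurring in the numerator Wronskian. By Remark~\ref{min-sol-analit} these poles sit at $\pm(n+m\omega)$, $n,m\ge1$, and the innermost ones are simple; tracking which of them fall into one period $\{|\re z|<\omega/2\}$, I find a single simple pole, located at $z=0$ and coming from the first pole $\mp(1+\omega)$ of $\psi$. Thus the poles of $a$ lie exactly on $\omega\Z$ and are simple, matching those of $\cot(\pi z/\omega)$, so $g$ has on $\widehat\C$ at most a simple pole, at the single point $w=1$. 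A rational function on $\widehat\C$ with at most a simple pole at one point and vanishing at two other points must vanish identically (the candidate family $c\,w/(w-1)$ already fails to vanish at $w=\infty$), so $g\equiv0$, which is the asserted equation \eqref{eq:senond-eq}.

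I expect the pole bookkeeping to be the main obstacle: one must verify that no stray poles of $\psi$ at the shifted arguments enter a single period, and that the limits in the boundary-value computation are genuine limits of the $\omega$-periodic function $a$ rather than merely directional ones---this is where the uniformity of the Bloch asymptotics on the cones $K_C$ in Theorem~\ref{lm_bloch_existence} is essential. The pleasant feature of this route is that the degree count on $\widehat\C$ forces the residue at $z=0$ to match automatically, so no explicit residue computation (which would require relating $W$ to the values $\psi(\pm1)$) is needed; the hypothesis that the asymptotic coefficients are non-zero is used only once, to guarantee $W\neq0$ via Lemma~\ref{le:wron-psi-psi-plus-odin}.
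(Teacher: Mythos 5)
Your argument is correct and is essentially the paper's own proof: both reduce the claim to the identity $a(z)=w(\psi(z+1),\psi(z-1))/W$ with $W$ the constant Wronskian from Lemma~\ref{le:wron-psi-psi-plus-odin}, locate the (simple) poles of $a$ on $\omega\Z$ via Remark~\ref{min-sol-analit}, compute the limits $E_1\pm i\lambda_1$ at $\pm i\infty$ from the Bloch representations, and conclude by a Liouville-type argument (your passage to a rational function of $w=e^{2\pi i z/\omega}$ on the Riemann sphere is just a rephrasing of the paper's observation that an $\omega$-periodic meromorphic function with simple poles on $\omega\Z$ and finite limits at $\pm i\infty$ must equal $E_1-\lambda_1\cot(\pi z/\omega)$). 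The only slip is notational: with the Wronskian $w(f,g)(z)=f(z)g(z-\omega)-f(z-\omega)g(z)$ the shifted arguments are $z\pm1$ and $z\pm1-\omega$ (not $z\pm1\mp\omega$), so the single simple pole per period at $z=0$ comes from the factor $\psi(z-1-\omega)$ and the simple pole of $\psi$ at $-(1+\omega)$; this does not affect your conclusion.
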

\begin{Rem} As the minimal solution described in Theorem~\ref{th:existence-light}
is analytic in  $\eta$, it solves~\eqref{eq:senond-eq}  even if $\eta\in \omega\Z$, 
i.e., if  its asymptotic coefficients vanish.  
\end{Rem}
\begin{proof}
Lemma~\ref{le:wron-psi-psi-plus-odin} implies that $\psi$ and
$\tilde\psi=\psi(\,\cdot\,+1)$  form a basis for the solution space of
the complex Maryland equation. The function $\phi=\psi(\,\cdot\,-1)$ also
solves this equation and, therefore, admits representation~(\ref{eq:bas}) 
with the periodic coefficients described by~(\ref{eq:lin-comb-coef}). 
As $w(\psi(z+1),\psi(z))$ is independent of  $z$, see 
Lemma~\ref{le:wron-psi-psi-plus-odin},
the coefficient  $b$  in this representation is identically equal to  $-1$.
So, to prove the theorem, it suffices to calculate the coefficient
\begin{equation}\label{eq:a-expl}
a(z)=\frac{w(\psi(z-1),\psi(z+1))}{w(\psi(z),\psi(z+1))}.
\end{equation}
The Wronskian in the denominator in this formula is described by 
Lemma~\ref{le:wron-psi-psi-plus-odin}. Let us discuss the Wronskian 
in the numerator. Remark~\ref{min-sol-analit} implies that  
$z\to w(\psi(z-1),\psi(z+1))$ is a meromorphic $\omega$-periodic 
function analytic in the strip  $0<\re z<\omega$ and that, on the 
boundary of the strip, it may have poles only at the points $z=0$ 
and $z=\omega$.  By a reasoning similar to one from the proof
of Lemma~\ref{le:wron-psi-psi-plus-odin}, one shows that, as
$\im z\to +\infty$, the Wronskian tends to 
\begin{equation*}
a_{+}a_{-}
\left(e^{-\frac{2(l-i\eta)}{\omega}}-e^{\frac{2(l-i\eta)}{\omega}}\right)
\left(e^{l-i\eta}-e^{-l+i\eta}\right),\\
\end{equation*}
and,  as $\im z\to-\infty$, it tends to
\begin{equation*}
b_{+}b_{-}
\left(e^{-\frac{2(l+i\eta)}{\omega}}-e^{\frac{2(l+i\eta)}{\omega}}\right)
\left(e^{l+i\eta}-e^{-l-i\eta}\right).
\end{equation*}
We see that $a$ is a meromorphic  $\omega$-periodic function, it 
may have poles only at $z\in \omega\Z$, these poles are simple, 
and $a$ tends to constants as $\im z\to\pm \infty$. This implies that 
$a(z)=E_1-\lambda_1\cot(\pi z/\omega)$ with some constants $E_1$ 
and $\lambda_1$. 

The above asymptotics for   $w(\psi(z-1),\psi(z+1))$ and the asymptotics
for $w(\psi(z+1),\psi(z))$ in Lemma~\ref{le:wron-psi-psi-plus-odin} 
imply that
\begin{equation*}
a(z)\to \begin{cases}
e^{\frac{l-i\eta}{\omega}}+e^{-\frac{l-i\eta}{\omega}}, & \im z\to +\infty,\\
e^{\frac{l+i\eta}{\omega}}+e^{-\frac{l+i\eta}{\omega}}, & \im z\to-\infty.
\end{cases}
\end{equation*}
Therefore, $E_1=2\cos(\eta/\omega){\rm ch}\,(l/\omega)$ and
$\lambda_1=-2\sin(\eta/\omega){\rm sh}\,(l/\omega)$.
This completes the proof of the theorem.
\end{proof}
\section{Monodromization and renormalization formulas}
\label{sec:renormalization}
In this section, first, following~\cite{F:13}, we recall basic ideas
of the monodromization theory, next, we prove a general renormalization 
formula for matrix cocycles, then, we  describe some corollaries from 
these constructions for the Maryland equation and, after that, 
prove Theorem~\ref{th:main}.
\subsection{Monodromization}
\subsubsection{Monodromy matrix}
Consider the matrix solutions of the equation
\begin{equation}
    \label{eq:matrix-line}
    \Psi\,(x+\omega)=M\,(x)\,\Psi\,(x),\qquad x\in\R, 
  \end{equation}
where $M$ is a given $1$-periodic ${\rm SL}\,(2,\C)$-valued function 
and  $0<\omega<1$ is a fixed number.

For any solution $\Psi$  of equation~(\ref{eq:matrix-line}), 
$\det\Psi$ is an $\omega$-periodic function. We call a solution $\Psi$ 
{\it fundamental}, if  $\det\Psi$ is independent of $x$ and does not vanish.
Below, we assume that $\Psi$ is a fundamental solution.

A function $\tilde\Psi:\R\to {\rm GL}(2,\C)$ solves~(\ref{eq:matrix-line})  
if and only if  
\begin{equation}
   \label{eq:general-matrix-sol}
   \tilde\Psi\,(x)=\Psi\,(x)\cdot p\,(x),\quad \forall\,x\in\R,
\end{equation}
where $p$ is an  $\omega$-periodic matrix function.

The function $x\to \Psi(x+1)$ is a solution of~(\ref{eq:matrix-line}) 
together with  $\Psi$. Therefore,
\begin{equation*}
\Psi\,(x+1)=\Psi\,(x)\cdot p\,(x),\quad p(x+\omega)=p(x),\qquad \forall x\in\R.
\end{equation*}
The matrix 
\begin{equation*}
M_1(x)=p^t(\omega x),
\end{equation*}
where ${}^t$ denotes transposition, is the  {\it monodromy} matrix 
corresponding to the fundamental solution $\Psi$. Like the matrix  $M$
from the input equation, the monodromy matrix is $\omega$-periodic and
unimodular.
\subsubsection{Very short introduction to the monodromization theory }
Let $\omega_1$ be the Gauss transform of  $\omega$,  i.e.
$\omega_1=\left\{\frac{1}{\omega}\right\}$. Consider the equation
\begin{equation}\label{M1}
\Psi_1(x+\omega_1)=M_1(x)\,\Psi_1(x), \quad x\in\R,
\end{equation}
where  $M_1$ is the monodromy matrix corresponding to a fundamental
solution  $\Psi$ of~\eqref{eq:matrix-line}. 
We say that~\eqref{M1} is a {\it monodromy} equation obtained 
from~\eqref{eq:matrix-line} via {\it monodromization}.

The monodromy equation~\eqref{M1} is similar to the input one:
the matrices $M$ and $M_1$ are both unimodular and $1$-periodic. 
Therefore, the {\it monodromization procedure} can be continued:
one can consider the monodromy matrix corresponding to a fundamental 
solution of~\eqref{M1} and the corresponding monodromy equation 
and so on. In result, one arrives to an infinite sequence 
of difference equations similar to the input one. There are deep 
relationships between these equations (see, for example, 
Theorem~\ref{th_renormalization_formula}). The leading idea of 
the monodromization method is to analyze  solutions of the 
input equation by analyzing properties of the dynamical system
that defines the coefficients of each  equation  
in the sequence in terms of the coefficients of the previous one. 
\subsubsection{Renormalization of matrix cocycles}
Together with~(\ref{eq:matrix-line}), consider the family of 
difference equations on $\Z$  
\begin{equation}
  \label{eq:matrix-Z}
  \Psi_{k+1}=M(\omega k+\theta)\,\Psi_k, \quad k\in\Z,
\end{equation}
where $0\le\theta<1$ is the parameter indexing the equations. Let 
$k\to P_k(M,\omega,\theta)$ be the solution of~(\ref{eq:matrix-Z}) 
equal to the identity matrix when $k=0$. It is obvious that
$P_k(M,\omega,\theta)=M(\omega(k-1)+\theta)\dots M(\omega+\theta)M(\theta)$,
when  $k>0$, and $P_k(M,\omega,\theta)=M^{-1}(\omega k+\theta)\dots 
M^{-1}(\omega-2\theta)M(\omega-\theta)$, when   $k<0$.
\begin{Th}[on renormalizations of matrix cocycles]
\label{th_renormalization_formula}
Let $\Psi$ be a fundamental solution of~(\ref{eq:matrix-line}), 
and let $M_1$ be the corresponding monodromy matrix.
Then, for all $N\in \Z$,
\begin{gather}
\label{eq_main_renormalization_formulae}
P_N(M,\omega,\theta)=\Psi(\{\theta+N\omega\})\sigma_2
P_{N_1}(M_1,\omega_1,\theta_1)\sigma_2\Psi^{-1}(\theta),\\
\label{eq:new-cocycle-parameters}
N_1=-[\theta+N\omega],\qquad\omega_1=\left\{1/\omega\right\},\qquad 
\theta_1=\left\{\theta/\omega\right\},
\end{gather}
where  $\sigma_2$ is the matrix defined in~(\ref{eq:Psi-sigma}).
\end{Th}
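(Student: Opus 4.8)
The plan is to establish the renormalization formula~\eqref{eq_main_renormalization_formulae} by expressing the cocycle $P_N(M,\omega,\theta)$ in terms of the fundamental solution $\Psi$ and then recognizing the monodromy cocycle inside the resulting product. First I would relate the discrete cocycle on $\Z$ to the continuous solution $\Psi$ on $\R$. The key identity to exploit is that $\Psi$ intertwines the continuous shift by $\omega$ via~\eqref{eq:matrix-line}: for any $x$, $\Psi(x+\omega)=M(x)\Psi(x)$, so iterating gives
\begin{equation*}
\Psi(\theta+N\omega)=M(\theta+(N-1)\omega)\cdots M(\theta)\,\Psi(\theta)=P_N(M,\omega,\theta)\,\Psi(\theta)
\end{equation*}
for $N\ge 0$, and the analogous formula with inverses for $N<0$. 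Hence $P_N(M,\omega,\theta)=\Psi(\theta+N\omega)\,\Psi^{-1}(\theta)$. This is the clean starting point: the whole product of $N$ factors collapses into a single evaluation of $\Psi$ at the two endpoints $\theta$ and $\theta+N\omega$.

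Next I would account for the fact that the argument $\theta+N\omega$ need not lie in $[0,1)$, whereas the right-hand side of~\eqref{eq_main_renormalization_formulae} uses the reduced argument $\{\theta+N\omega\}$. Writing $\theta+N\omega=\{\theta+N\omega\}+[\theta+N\omega]$ and recalling $N_1=-[\theta+N\omega]$, I need to understand how $\Psi$ transforms under an integer shift of its argument. The relation $\Psi(x+1)=\Psi(x)\,p(x)$ with $p$ being $\omega$-periodic, together with the definition $M_1(x)=p^t(\omega x)$, is exactly the mechanism that introduces the monodromy cocycle. Iterating the shift-by-$1$ relation produces a product of $p$-factors evaluated along the orbit $x,x+\omega,\dots$, which after the transposition and the rescaling $x\mapsto\omega x$ becomes a monodromy cocycle $P_{N_1}(M_1,\omega_1,\theta_1)$ with $\omega_1=\{1/\omega\}$ and $\theta_1=\{\theta/\omega\}$. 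The conjugation by $\sigma_2$ should arise because transposition of an $\mathrm{SL}(2,\C)$ matrix $A$ equals $\sigma_2 A^{-1}\sigma_2$ up to the adjugate relation $A^t=\sigma_2^{-1}(\det A)\,\mathrm{adj}(A)^t\sigma_2$; more precisely, for unimodular $A$ one has the standard identity tying $A^t$ to $\sigma_2 A\sigma_2$, which converts the transposed $p$-product into the correctly oriented monodromy product and supplies the bracketing $\sigma_2(\cdots)\sigma_2$ in~\eqref{eq_main_renormalization_formulae}.

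The main obstacle I anticipate is the careful bookkeeping of the integer shift, specifically matching the number of factors and their arguments so that the shift-by-$[\theta+N\omega]$ in the argument of $\Psi$ turns into precisely $P_{N_1}(M_1,\omega_1,\theta_1)$ with the stated $N_1,\omega_1,\theta_1$, and verifying that the frequency $1/\omega$ reducing to its fractional part $\omega_1$ is consistent with the rescaling $x\mapsto\omega x$ in $M_1(x)=p^t(\omega x)$. The identity $\{\theta/\omega\}=\theta_1$ and the sign convention $N_1=-[\theta+N\omega]$ have to be tracked through both the positive and negative ranges of $N$, and the orientation of the monodromy product (whether factors accumulate on the left or right, and whether one gets $P_{N_1}$ or its inverse) must be pinned down by comparing Wronskian-type determinant conditions. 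I would handle this by first proving the formula for a single integer shift $N_1=\pm 1$ directly from $\Psi(x+1)=\Psi(x)p(x)$ and the definition of $M_1$, checking the $\sigma_2$-conjugation identity explicitly at that level, and then extending to general $N_1$ by induction on $|N_1|$, using the $\omega_1$-periodicity of the monodromy data to close the induction.

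The cleaner alternative, which I would pursue in parallel as a consistency check, is to verify~\eqref{eq_main_renormalization_formulae} by confirming that both sides satisfy the same first-order matrix recurrence in $N$ and agree at $N=0$. At $N=0$ one has $\{\theta\}=\theta$, $N_1=0$, so $P_0=\mathrm{Id}$ on the left and $\Psi(\theta)\sigma_2\,\mathrm{Id}\,\sigma_2\Psi^{-1}(\theta)=\sigma_2^2=-\mathrm{Id}$ on the right unless the sign and the $\sigma_2^2=\mathrm{Id}$ normalization are handled correctly; since $\sigma_2^2=I$, this indeed gives the identity and fixes the base case. Matching the recurrence step $N\mapsto N+1$ reduces to the single-shift identity above, so the two approaches converge on the same crucial lemma, namely the behavior of $\Psi$ under the unit shift expressed through $M_1$.
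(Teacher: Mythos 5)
Your proposal is correct and follows essentially the same route as the paper's proof: collapsing the cocycle to $\Psi(\theta+N\omega)\Psi^{-1}(\theta)$, peeling off the integer part of $\theta+N\omega$ via the unit-shift relation $\Psi(x+1)=\Psi(x)p(x)$ to produce a product of $M_1^t$ factors, and converting transposes to inverses with the identity $A^t=\sigma_2A^{-1}\sigma_2$ for $A\in\mathrm{SL}(2,\C)$, which yields exactly $P_{N_1}(M_1,\omega_1,\theta_1)$. The paper simply writes out the resulting product and invokes the $1$-periodicity of $M_1$ rather than setting up an induction on $|N_1|$, but the content is the same.
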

A renormalization formula similar to~\eqref{eq_main_renormalization_formulae}
was  stated without proof in~\cite{FK:08}. 
Formula~\eqref{eq_main_renormalization_formulae} relates the solution
$P(M,\omega,\theta)$ of equation~(\ref{eq:matrix-Z}) to the solution 
$P(M_1,\omega_1,\theta_1)$ of the equation of the same form 
but with the matrix $M_1$ and the parameters $\omega_1$ and $\theta_1$ 
instead of $M$, $\omega$ and $\theta$. 
\begin{proof} 
In the case of $N=0$, the statement is obvious.
Assume that $N>0$ (the case of $N<0$ is treated similarly).
Equation~\eqref{eq:matrix-line} implies that
$$\Psi(\theta+N\omega)=M(\theta+(N-1)\omega)\,M(\theta+(N-2)\omega)...
M(\theta)\Psi(\theta)= P_N(M,\omega,\theta)\Psi(\theta).$$
The solution $\Psi$ being fundamental, the matrix $\Psi(\theta)$ is invertible. 
Therefore,
\begin{gather}
\label{eq_cocycle_repr_fundam_sol}
P_N(M,\omega,\theta)=\Psi(\theta+N\omega)\Psi^{-1}(\theta).
\end{gather}
The definition of the monodromy matrix $M_1$ implies that
$\Psi(x)=\Psi(x-1)M_1^t\left(\frac{x-1}{\omega}\right)$.
Using this relation to express $\Psi(\theta+N\omega)$ in terms of 
$\Psi(\{\theta+N\omega\})$, we get
\begin{equation*}
\begin{split}
P_N(M,\omega,\theta)=\Psi(\{\theta+N\omega\})\,&
M_1^t\left(\frac{\theta+N\omega-[\theta+N\omega]}{\omega}\right)\,\dots\\
&\dots\, M_1^t\left(\frac{\theta+N\omega-2}{\omega}\right)\,
M_1^t\left(\frac{\theta+N\omega-1}{\omega}\right)\,\Psi^{-1}(\theta).
\end{split}
\end{equation*}
Taking into account the $1$-periodicity of  $M_1$ and  
using~\eqref{eq:new-cocycle-parameters}, we arrive at the formula
$$
P_N(M,\omega,\theta)=\Psi(\{\theta+N\omega\})\,M_1^t(\theta_1+N_1\omega_1)\,
\dots\,M_1^t(\theta_1-2\omega_1)\,M_1^t(\theta_1-\omega_1)\,\Psi^{-1}(\theta).
$$
For any $A\in SL(2,\C)$, one has $A^t=\sigma_2 A^{-1}\sigma_2$.
Therefore,
\begin{equation*}
\begin{split}
\Psi(\{\theta+N\omega\})\,\sigma_2\,M_1^{-1}(\theta_1+N_1\omega_1)\,&\dots\, 
M_1^{-1}(\theta_1-2\omega_1)M_1^{-1}(\theta_1-\omega_1)\,\sigma_2\,
\Psi^{-1}(\theta)=\\
&=\Psi(\{\theta+N\omega\})\sigma_2
P_{N_1}(M_1,\omega_1,\theta_1)\sigma_2^{-1}\Psi^{-1}(\theta).
\end{split}
\end{equation*} 
This implies the statement of the theorem.
\end{proof}
\subsection{Monodromization and  the Maryland equation}
\subsubsection{Monodromy matrix for the complex Maryland equation}
Let $\psi$ be the minimal meromorphic solution of the complex Maryland equation
described in Theorem~\ref{th:existence-light}.  In terms of $\psi$, we construct
the matrix  $\Psi$ as in~(\ref{eq:Psi-sigma}). One has
\begin{Le}\label{le:Mary:fund-sol} 
The function $z\to\Psi(z,\eta,l)$ solves the equation
\begin{equation}
  \label{eq:Mary-R}
  \Psi(z+\omega)=\F(z,\eta,l)\Psi(z),\quad z\in\C.
\end{equation}
If $\eta\not\in\omega\Z$, the solution $\Psi$ is fundamental.
\end{Le}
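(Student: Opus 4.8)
The plan is to verify the two assertions in turn: that $\Psi$ solves~(\ref{eq:Mary-R}), and that it is fundamental when $\eta\notin\omega\Z$. For the first assertion, I would rewrite the complex Maryland equation~(\ref{eq:complex-Mary}) in resolved form $\psi(z+\omega)=\left(E-\lambda\cot(\pi z)\right)\psi(z)-\psi(z-\omega)$ and note, via~(\ref{eq_Gamma_l_E_lambda}), that $E-\lambda\cot(\pi z)=2{\rm ch}\,l\cos\eta+2{\rm sh}\,l\sin\eta\cot(\pi z)$ is precisely the $(1,1)$-entry of $\F(z,\eta,l)$. This shows at once that the first column $\left(\psi(z),\psi(z-\omega)\right)^{t}$ of $\Psi$ satisfies the vector recursion $v(z+\omega)=\F(z,\eta,l)\,v(z)$. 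The key observation for the second column is that the coefficient $\cot(\pi z)$ is $1$-periodic, so $z\mapsto\psi(z-1)$ is again a solution of~(\ref{eq:complex-Mary}); hence the second column $\left(\psi(z-1),\psi(z-1-\omega)\right)^{t}$ obeys the same recursion. Since both columns satisfy $v(z+\omega)=\F(z,\eta,l)v(z)$, so does the matrix $\Psi$, which is~(\ref{eq:Mary-R}).

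For fundamentality, recall that a solution is fundamental when $\det\Psi$ is a non-vanishing constant. Since $\det\F(z,\eta,l)=1$, equation~(\ref{eq:Mary-R}) already forces $\det\Psi$ to be $\omega$-periodic. To get constancy and the explicit value, I would identify $\det\Psi(z)=\psi(z)\psi(z-1-\omega)-\psi(z-1)\psi(z-\omega)$ with a Wronskian: comparing with~(\ref{eq:wron}), this is exactly the Wronskian $w(\psi(\,\cdot\,+1),\psi)$ of the two solutions $\psi(\,\cdot\,+1)$ and $\psi$, evaluated at the point $z-1$. By Lemma~\ref{le:wron-psi-psi-plus-odin} that Wronskian is independent of its argument, so $\det\Psi$ equals the constant given there, namely $a_{+}a_{-}\left(e^{\frac{l-i\eta}{\omega}}-e^{-\frac{l-i\eta}{\omega}}\right)\left(e^{l-i\eta}-e^{-l+i\eta}\right)$.

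It remains to check that this constant is non-zero, which is where the hypotheses $\eta\notin\omega\Z$ and $l>0$ enter. The asymptotic coefficients $a_\pm$ are non-zero for $\eta\notin\omega\Z$ by Theorem~\ref{th:existence-light}. The two remaining factors equal $2{\rm sh}\,\frac{l-i\eta}{\omega}$ and $2{\rm sh}(l-i\eta)$, and since ${\rm sh}\,w$ vanishes only on the imaginary axis while $l>0$ keeps both arguments off it, neither factor vanishes. Hence $\det\Psi$ is a non-zero constant and $\Psi$ is fundamental. I expect the only delicate point to be the correct bookkeeping of shifts identifying $\det\Psi(z)$ with the Wronskian $w(\psi(\,\cdot\,+1),\psi)$ evaluated at $z-1$, so that Lemma~\ref{le:wron-psi-psi-plus-odin} applies cleanly; the recursion itself reduces to the routine observation that $1$-periodicity of the potential makes $\psi(\,\cdot\,-1)$ a solution.
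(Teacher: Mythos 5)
Your proof is correct and follows essentially the same route as the paper's (which simply declares the first statement obvious and identifies $\det\Psi$ with the Wronskian of $\psi$ and $\psi(\,\cdot\,-1)$, then invokes Lemma~\ref{le:wron-psi-psi-plus-odin}). You merely spell out the details the paper leaves implicit: the column-by-column verification of the recursion, the shift bookkeeping matching $\det\Psi(z)$ to the constant Wronskian, and the non-vanishing of the ${\rm sh}$ factors and of $a_\pm$ (via Theorem~\ref{th:existence-light}) when $\eta\not\in\omega\Z$.
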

\begin{proof} The first statement is obvious. As $\det\Psi$ 
equals the Wronskian of  $\psi$ and $\psi(\,\cdot\,-1)$, the 
second statement follows from Lemma~\ref{le:wron-psi-psi-plus-odin}. 
\end{proof}
\begin{Rem} {\sl By Theorem~\ref{th:existence-light}, the solution
$\Psi$ analytically depends on $\eta$.}
\end{Rem}
The definition of the monodromy matrix and Theorem~\ref{th:second-eq} 
imply
\begin{Th}\label{pro:Mary-M} 
If $\eta\not\in\omega\Z$, the monodromy matrix corresponding 
to the fundamental solution $\Psi(\cdot,\eta,l)$
equals $\F(\cdot,\eta_1,l_1)$, where  $\eta_1$ and $l_1$ are 
defined by the formulas in ~(\ref{eq_new_parameters_and_old}).
\end{Th}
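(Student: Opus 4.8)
The plan is to exhibit the $\omega$-periodic multiplier $p$ in the relation $\Psi(z+1)=\Psi(z)\,p(z)$ explicitly, and then read off the monodromy matrix as $M_1(x)=p^t(\omega x)$. All the analytic substance is already contained in Theorem~\ref{th:second-eq}, so what is left is essentially linear-algebraic bookkeeping.

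First I would record the column structure of $\Psi$ from~(\ref{eq:Psi-sigma}). Computing $\Psi(z+1)$ directly from the definition, its second column equals the first column of $\Psi(z)$, namely the vector $(\psi(z),\psi(z-\omega))^t$, while its first column is $(\psi(z+1),\psi(z+1-\omega))^t$. By Lemma~\ref{le:Mary:fund-sol}, for $\eta\notin\omega\Z$ the two columns of $\Psi(z)$ form a basis, so it suffices to expand the new vector over them.

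Next I would invoke the second difference equation of Theorem~\ref{th:second-eq}, rewritten as
\[
\psi(z+1)=\bigl(E_1-\lambda_1\cot(\pi z/\omega)\bigr)\psi(z)-\psi(z-1),
\]
i.e. $\psi(\,\cdot\,+1)=a(z)\psi+b(z)\psi(\,\cdot\,-1)$ with $a(z)=E_1-\lambda_1\cot(\pi z/\omega)$ and $b(z)=-1$; both coefficients are $\omega$-periodic because $\cot(\pi z/\omega)$ is. Applying this identity at the arguments $z$ and $z-\omega$ and using the $\omega$-periodicity of $a,b$ shows that the first column of $\Psi(z+1)$ is $a(z)$ times the first column of $\Psi(z)$ plus $b(z)$ times the second, while the second column of $\Psi(z+1)$ is the first column of $\Psi(z)$. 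Hence
\[
\Psi(z+1)=\Psi(z)\,p(z),\qquad
p(z)=\begin{pmatrix} a(z) & 1\\ b(z) & 0\end{pmatrix}.
\]

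Finally I would form $M_1(x)=p^t(\omega x)$. Transposing and inserting the argument $\omega x$, so that $\cot(\pi\omega x/\omega)=\cot(\pi x)$, yields
\[
M_1(x)=\begin{pmatrix} E_1-\lambda_1\cot(\pi x) & -1\\ 1 & 0\end{pmatrix},
\]
and substituting $E_1,\lambda_1$ from~(\ref{eq:lambda1-E1}) gives the top-left entry $2\,{\rm ch}\,l_1\cos\eta_1+2\,{\rm sh}\,l_1\sin\eta_1\cot(\pi x)$, so that $M_1(x)=\F(x,\eta_1,l_1)$ by~(\ref{eq:matrix-F}). I expect no genuine obstacle beyond careful bookkeeping: the real work lives in Theorem~\ref{th:second-eq}, and the only points requiring attention are the identification of the scalar solution carried by each column, the $\omega$-periodicity of $p$ (which is exactly what makes $M_1$ well defined), and the rescaling of the cotangent argument forced by the transpose $p^t(\omega x)$.
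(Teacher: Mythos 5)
Your proof is correct and takes exactly the route the paper intends: the paper states the theorem as an immediate consequence of the definition of the monodromy matrix together with Theorem~\ref{th:second-eq}, and your column-by-column determination of $p(z)=\begin{pmatrix}E_1-\lambda_1\cot(\pi z/\omega) & 1\\ -1 & 0\end{pmatrix}$, followed by transposition and the substitution $z=\omega x$, is precisely the bookkeeping left implicit there. The signs, the use of Lemma~\ref{le:Mary:fund-sol} to justify the expansion over the columns of $\Psi$, and the identification of the resulting matrix with $\F(x,\eta_1,l_1)$ via~(\ref{eq:lambda1-E1}) all check out.
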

\subsubsection{Invariance with respect to monodromization}
Theorem~\ref{pro:Mary-M} means that the {\it matrix Maryland 
equation}~(\ref{eq:Mary-R}) is invariant with respect to monodromization:  
after  monodromization, it appears to be  transformed into 
the matrix Maryland equation with new parameters. 
Like~(\ref{eq:Mary-R}), the latter is equivalent to a complex Maryland 
equation~(\ref{eq:complex-Mary}) \ (the equation with new parameters), 
and one can say that the complex Maryland equation is invariant 
with respect to monodromization.   Actually, it is this invariance 
that leads to the renormalization 
formula~\eqref{eq:main_renormalization_formulae}.
 
In~\cite{BF:01}, the authors consider difference equations
on $\C$ the coefficients of which are trigonometric polynomials, and describe
equation families invariant with respect to monodromization. 
One of these families contains an equation related to the famous Almost Mathieu 
equation (in the same way as the Maryland equation is related to the complex 
Maryland equation). However, in general case, investigation of  
the trigonometric polynomial coefficients transformation, which occurs in 
result of monodromization, appears to be a very non-trivial problem. Only 
for very special trigonometric polynomials, this transformation is known to 
be elementary ~\cite{FK:08}.  The explicit transformation of the complex 
Maryland equation coefficients proved  in this paper looks to be a very 
rare phenomenon. 
\subsubsection{Renormalization formula for the Maryland equation}
For $\eta\not\in\omega\Z$, Theorem~\ref{th:main} immediately follows 
Theorems~\ref{th_renormalization_formula} and~\ref{pro:Mary-M}. As the 
renormalization formula~(\ref{eq:main_renormalization_formulae}) is an 
equality of two functions analytic in $\eta$,  the statement of 
Theorem~\ref{th:main} remains valid also for $\eta\in\omega\Z$.
\section{Construction of the minimal meromorphic solution}
\label{sec:min-sol-constr}
In this section, we construct a minimal meromorphic solution of the 
complex Maryland equation~(\ref{eq:complex-Mary}). First, we describe 
a solution analytic in $\C\setminus \R$. Next, we check that it can 
be continued to a meromorphic function.  Then, we compute the 
asymptotics of this function for  $\im  z\to\pm\infty$ and show that 
it is minimal. Finally, we prove Theorem~\ref{lm_bloch_existence}.

Below,  $C$ denotes different positive constants (independent of  $z$),
and $K_C=\{z\in\C\,:\, |\im z|\ge C |\re z|\}$.
\subsection{Solution analytic in $\C\setminus\R$}
We begin with a short description of a special function we use 
in this section. 
\subsubsection{$\sigma$-function}\label{sect_sigma_function}
An analogous function was introduced and systematically studied 
in the diffraction theory~\cite{BLG}. Later it appeared in other 
domains, e.g.~\cite{FKV} and~\cite{BF:01}. Below, we rely on  the 
last paper.

The special function  $\sigma$ can be uniquely defined as 
a meromorphic solution of the difference equation
\begin{equation}
\label{eq_sigma_main_equation}
\sigma(z+\pi\omega)=(1+e^{-iz})\,\sigma(z-\pi\omega)
\end{equation}
which is analytic in the strip $S=\{z\in\C\,:\,|\re z|<\pi(1+\omega)\}$, does not
vanish there and admits in  $S$  the following uniform asymptotics:
\begin{gather}
\label{eq_sigma_down_asymp}
\sigma(z)=1+o(1),\quad \im z\to-\infty,\\
\label{eq_sigma_up_asymp}
\sigma(z)=e^{-\frac{i z^2}{4\pi\omega}+\frac{i\pi}{12 \omega}+\frac{i\pi\omega}{12}}
(1+o(1)), \quad \im z\to \infty.
\end{gather}
The  asymptotics~(\ref{eq_sigma_down_asymp}) and~(\ref{eq_sigma_up_asymp}) 
appear to be uniform in $K_C$  for any fixed  $C$.
The poles of  $\sigma$ are located at the points
\begin{equation}
\label{eq_sigma_poles} z=-\big(\pi(1+\omega)+2\pi\omega k+2\pi m\big), \quad
k,m\in\N\cup\{0\},
\end{equation}
and its zeros are described by the formulas
\begin{equation}
\label{eq_sigma_zeroes} 
z=\pi(1+\omega)+2\pi\omega k+2\pi m,\quad
k,m\in\N\cup\{0\};
\end{equation}
the zero at  $z=\pi(1+\omega)$ and the pole at  $z=\pi(1+\omega)$ are simple.
We note that 
$$\res_{z=\pi(1+\omega)}\frac{1}{\sigma(z)}=
-\sqrt{\omega}\,e^{\frac{i\pi}{12\omega}+\frac{i\pi\omega}{12}+\frac{i\pi}{4}}.$$
The $\sigma$ function solves one more difference equation
\begin{equation}
\label{eq_sigma_second_equation}
\sigma(z+\pi)=(1+e^{-iz\,/\,\omega})\sigma(z-\pi)
\end{equation}
and satisfies the relations
\begin{equation}
\label{eq:sigma-funct-prop}
\sigma(z)=e^{-\frac{i}{4\pi\omega}z^2+\frac{i\pi}{12\omega}+
\frac{i\pi\omega}{12}}\,/\,\sigma(-z)\quad\text{and}\quad
\overline{\sigma(\overline{z})}=1\,/\,\sigma(-z).
\end{equation}
\subsubsection{Analytic solution of the complex Maryland equation} 
Here, we always assume that $\im z\ne 0$.
Instead of $E$ and $\lambda$, we use the parameters $\eta\in\R$ and $l>0$, 
see~(\ref{eq_Gamma_l_E_lambda}). It is convinient to consider $|\eta|<\pi+\omega$.
We construct a  solution represented by  a contour integral,
and we begin by describing the contour. 

Put
\begin{align*}
\C'=\C\setminus\big((\eta-il-\pi(1+\omega)-\R_+)
\cup(-\eta-il+\pi(1+\omega)+\R_+)\cup\\
\cup(-\eta+il-\pi(1+\omega)-\R_+)\cup (\eta+il+\pi(1+\omega)+\R_+)\big).
\end{align*}
For $z\in\C$, denote by  $D(z)$ the set of rays  going in $\C$ to infinity 
in parallel to the  vectors corresponding to the complex numbers
\begin{equation}
\label{eq_gamma_asymptotes}
\tau=e^{i\alpha},\quad \alpha\in(-\arg z,-\arg z+\pi);\quad -\pi<\arg z<\pi.
\end{equation}
We assume that  $\gamma=\gamma(z)$ is a curve in $\C'$ and that, first, 
it goes from $-i\infty$ to $-2il$ along a ray of $D(z)$, then, it goes from
$-2il$ to $2il$ along $i\R$ and, finally, it goes to  $+i\infty$ 
along one more ray of  $D(z)$.
\begin{Pro} \label{lm_Y_z_not_in_R}
If $|\eta|<\pi(1+\omega)$, the formula
\begin{equation}
\label{eq_Y_integral_representation}
\Y(z)=\sin (\pi z) \sin\left(\frac{\pi}{\omega}z\right)
\int_{\gamma(z)}e^{\frac{i p z}{\omega}}\frac{\sigma(p+\eta-i l)
\sigma(p-\eta+il)}{\sigma(p-\eta-i l)\sigma(p+\eta+il)}\,dp
\end{equation}
defines a solution of~(\ref{eq:complex-Mary})  analytic in  
$z\in\C\setminus\R$. This solution is also analytic 
in $\eta$.
\end{Pro}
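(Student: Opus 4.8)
The plan is to verify directly that $\Y$ as defined by the contour integral in~\eqref{eq_Y_integral_representation} solves the complex Maryland equation~\eqref{eq:complex-Mary}, and to establish the analyticity claims. I would organize the work into three parts: (i) show the integral converges and defines an analytic function of $z$ off the real axis (and jointly in $\eta$); (ii) verify the difference equation; (iii) reconcile the prefactor $\sin(\pi z)\sin(\pi z/\omega)$ with the required analyticity. The whole scheme rests on exploiting the two functional equations~\eqref{eq_sigma_main_equation} and~\eqref{eq_sigma_second_equation} for $\sigma$, which is why the integrand is built precisely out of shifted $\sigma$-factors.

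\textbf{Convergence and analyticity.} First I would examine the integrand $g(p,z)=e^{ipz/\omega}\,\frac{\sigma(p+\eta-il)\sigma(p-\eta+il)}{\sigma(p-\eta-il)\sigma(p+\eta+il)}$. Using the Gaussian asymptotics~\eqref{eq_sigma_up_asymp} for $\im p\to+\infty$ and~\eqref{eq_sigma_down_asymp} for $\im p\to-\infty$, the four $\sigma$-factors combine so that the quadratic exponents $-ip^2/(4\pi\omega)$ cancel in the ratio (equal numbers of numerator and denominator factors), leaving at worst linear growth in $p$ from the shifts $\pm\eta\pm il$ together with the explicit factor $e^{ipz/\omega}$. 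The point of prescribing the rays of $D(z)$ in~\eqref{eq_gamma_asymptotes} is exactly to make $\re(ipz/\omega)\to-\infty$ along both tails of $\gamma(z)$, forcing exponential decay and hence absolute convergence. The middle segment along $i\R$ between $-2il$ and $2il$ is compact, and the condition $|\eta|<\pi(1+\omega)$ keeps the contour inside $\C'$, away from the four cuts where the $\sigma$-ratio has its poles and zeros; so no singularity is crossed. Analyticity in $z$ and in $\eta$ then follows by differentiating under the integral sign, the admissible directions for $\gamma(z)$ varying continuously with $\arg z$ so that the contour can be deformed as $z$ moves within $\C\setminus\R$.

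\textbf{The difference equation.} This is the heart of the matter. I would form the combination $\Y(z+\omega)+\Y(z-\omega)+\lambda\cot(\pi z)\,\Y(z)$ and aim to show it equals $E\,\Y(z)$. Shifting $z\mapsto z\pm\omega$ multiplies $e^{ipz/\omega}$ by $e^{\pm ip}$, and modifies the trigonometric prefactors; the strategy is to write everything over the common integrand $g(p,z)$ and use the $\sigma$-functional equations to rewrite $\sigma(p\pm\eta\mp il)$-type factors after a shift $p\mapsto p\pm\pi\omega$ in the integration variable, turning the $(1+e^{\pm i(\cdot)})$ factors from~\eqref{eq_sigma_main_equation} into the trigonometric/exponential coefficients $2\,\mathrm{ch}\,l\cos\eta\pm\dots$ that, via~\eqref{eq_Gamma_l_E_lambda}, reproduce $E$ and $\lambda$. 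After these shifts the three integrals share a common integrand and contour (up to controllable endpoint contributions), so the claimed identity reduces to an algebraic identity among the coefficients produced by the functional equations. The main obstacle I expect is bookkeeping the contour shifts: each substitution $p\mapsto p\pm\pi\omega$ moves $\gamma$, and one must argue that the deformation back to $\gamma(z)$ crosses no poles of the $\sigma$-ratio (again using $|\eta|<\pi(1+\omega)$ and the geometry of $D(z)$) and that the contributions at $\pm i\infty$ vanish by the decay established in part (i). Handling these endpoint/deformation terms carefully, and checking that the residual boundary terms coming from the $\sin(\pi z)$, $\sin(\pi z/\omega)$ prefactors cancel rather than obstruct, is where the real care is needed.

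\textbf{Analyticity across the prefactor.} Finally, although $\Y$ is a priori defined only for $\im z\ne 0$, the explicit zeros of $\sin(\pi z)\sin(\pi z/\omega)$ and the behavior of the integral as $\im z\to 0^\pm$ must be compatible with $\Y$ being a genuine analytic function on each of the half-planes; I would note that the stated conclusion is analyticity in $\C\setminus\R$ only, so it suffices to keep $\gamma(z)$ admissible as $z$ ranges over the upper and lower half-planes separately, the prefactor being entire and causing no trouble there. The joint analyticity in $\eta$ for $|\eta|<\pi(1+\omega)$ follows from the same differentiation-under-the-integral argument together with the fact that the cuts defining $\C'$ depend continuously on $\eta$, so the contour can be kept clear of them throughout the allowed range.
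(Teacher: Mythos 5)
Your outline reproduces the paper's own strategy: convergence and analyticity from the asymptotics \eqref{eq_sigma_down_asymp}--\eqref{eq_sigma_up_asymp} together with the choice of rays in $D(z)$; reduction of \eqref{eq:complex-Mary} for $\Y$ to an identity for the bare contour integral $X(z)=\int_{\gamma(z)}e^{ipz/\omega}\hat X(p)\,dp$; and verification of that identity by writing the trigonometric coefficients as exponentials, shifting $p\mapsto p\mp\pi\omega$, and invoking the functional equation \eqref{eq_sigma_main_equation} to relate $\hat X(p+\pi\omega)$ to $\hat X(p-\pi\omega)$. Parts (i) and (iii) are essentially fine (one small correction: in the ratio of four $\sigma$'s not only the quadratic but also the linear terms of the exponents cancel, so the ratio tends to a \emph{constant} as $\im p\to+\infty$; if there really were residual linear growth $e^{cp}$ in the exponent, the decay forced by $D(z)$ would not beat it uniformly in $z$, so this cancellation is worth checking explicitly). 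Also note that to use one and the same contour for $X(z)$ and $X(z\pm\omega)$ you must first pick the tails of $\gamma$ inside $D(z-\omega)\cap D(z+\omega)$.

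The one genuine gap is in your treatment of the contour translation. You propose to ``argue that the deformation back to $\gamma(z)$ crosses no poles of the $\sigma$-ratio.'' That cannot be arranged: translating $\gamma+\pi\omega$ back to $\gamma$ (and $\gamma-\pi\omega$ forward to $\gamma$) does in general sweep across the poles of $\hat X(p-\pi\omega)$ (resp.\ $\hat X(p+\pi\omega)$) nearest to the contour, namely $p=\pm(\eta-il)-\pi$ (resp.\ $p=\pm(\eta+il)+\pi$), which come from the first pole and first zero of $\sigma$ at $\mp\pi(1+\omega)$. The argument closes only because these poles are \emph{simple} and are exactly cancelled by the zeros of the factors $(1+e^{-i(p+\eta-il)})(1+e^{-i(p-\eta+il)})$ (resp.\ $(1+e^{-i(p-\eta-il)})(1+e^{-i(p+\eta+il)})$) that the functional equation \eqref{eq_sigma_main_equation} places in front of $\hat X(p\mp\pi\omega)$: the full integrand is regular at these points, so the translation produces no residue terms. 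Without this cancellation observation --- which is the crux of the paper's proof and also the reason the poles of $\sigma$ at $\pm\pi(1+\omega)$ must be simple --- the verification of the difference equation does not go through.
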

\begin{proof}
The description of the  poles and zeros of the $\sigma$-function
implies the analyticity of the integrand in $p\in\C'$. 
The convergence and  analyticity of the integral 
in~\eqref{eq_Y_integral_representation} follow from 
estimates~(\ref{eq_sigma_down_asymp}) and~(\ref{eq_sigma_up_asymp}) 
and from the definition of the curve $\gamma(z)$. Let us check that  $\Y$ 
solves~(\ref{eq:complex-Mary}). Denote the contour integral 
by $X(z)$, and denote the integrand by $e^{\frac{i p z}{\omega}} \hat X(p)$.
Equation~(\ref{eq:complex-Mary})  for $\Y$ is equivalent to the equation
\begin{equation}
\label{eq_main_X_equation}
\begin{split}
\sin(\pi(z+\omega))X(z+\omega)&+\sin(\pi(z-\omega))X(z-\omega)+\\
 &+2\left(\cos\eta\,\,{\rm ch}\, l\,\sin(\pi z) +\sin\eta\,\,{\rm sh}\,l\, 
\cos (\pi z)\right)X(z)=0.
\end{split}
\end{equation}
Assume that  $\gamma=\gamma(z)$ goes to $\pm i\infty$  along rays from the set
$D(z-\omega)\cap D(z+\omega)$. Then this curve can be used as the integration
contour in the representations  for each of  the functions $X(z)$, 
$X(z-\omega)$ and $X(z+\omega)$. This allows to 
transform~\eqref{eq_main_X_equation} to the equation
\begin{equation}\label{last-equality}
\begin{split}\int_{\gamma+\pi\omega} &e^{\frac{i (p+\omega) z}{\omega}}
(1+e^{-i(p+\eta-i l)})(1+e^{-i(p-\eta+i l)})\hat{X}(p-\pi\omega)\,dp-\\
&-\int_{\gamma-\pi\omega} e^{\frac{i (p+\omega) z}{\omega}}
(1+e^{-i(p-\eta-i l)})(1+e^{-i(p+\eta+i l)})\hat{X}(p+\pi\omega)\,dp=0.
\end{split}
\end{equation} 
Equation~(\ref{eq_sigma_main_equation})  and the definition of  $\hat X$ 
imply that
\begin{equation*}
\hat{X}(p+\pi\omega)=\frac{(1+e^{-i(p+\eta-i l)})(1+e^{-i(p-\eta+i l)})}
{(1+e^{-i(p-\eta-i l)})(1+e^{-i(p+\eta+i l)})}\hat{X}(p-\pi\omega).
\end{equation*}
So, it suffices to check that, in~\eqref{last-equality}, one can replace 
$\gamma\pm \pi\omega$ by $\gamma$. Consider the first integral 
in~\eqref{last-equality}, the second one can be treated similarly. 
Translate $\gamma+ \pi\omega$ to $\gamma$ along the real line. 
Asymptotics~\eqref{eq_sigma_down_asymp} and~\eqref{eq_sigma_up_asymp}
imply that, translating the integration contour, 
we do not break the convergence of the integral. So, we need only to 
check that, when being translated,  the integration contour does not 
cross any poles of the integrand. The description of the zeros and poles of 
$\sigma$ shows that the contour can cross only the poles of 
$\hat X(\,.-\pi\omega)$ located at  $p=\pm (\eta-il)-\pi$. These poles being 
simple, the expression $(1+e^{-i(p+\eta-i l)})(1+e^{-i(p-\eta+i l)})
\hat{X}(p-\pi\omega)$ has no singularities at  $p=\pm (\eta-il)-\pi$.
This implies the desired. 
\end{proof}
\begin{Rem} {\sl Using equation~(\ref{eq_sigma_second_equation}), 
one can directly check that  $\Y$ solves~(\ref{eq:senond-eq}). By means 
of~(\ref{eq:sigma-funct-prop}), one proves that
$\Y(\overline{z})=-\overline{\Y(z)}$.} 
\end{Rem}
\subsection{Real $z$}\label{subsec:regularisation}
Here, we prove
\begin{Pro}\label{lm:Y_continuation_real}
The solution $\Y$ can be continued to a meromorphic function that may have poles 
only at  $z=\pm(\omega k+m)$, \ $k,m\in\N$.
\end{Pro}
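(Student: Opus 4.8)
The plan is to continue the function $\Y$, defined by the contour integral~\eqref{eq_Y_integral_representation} for $\im z\ne0$, across the real axis by deforming the integration contour. The current obstruction to setting $\im z=0$ is that the asymptotic directions of the rays of $\gamma(z)$ at $\pm i\infty$ are governed by $\arg z$ through~\eqref{eq_gamma_asymptotes}; as $z$ crosses $\R$, the admissible cone of directions $D(z)$ rotates, and the prefactor $\sin(\pi z)\sin(\pi z/\omega)$ remains entire, so the only possible loss of analyticity comes from the integral. First I would separate the analysis near $+i\infty$ from that near $-i\infty$: by~\eqref{eq_sigma_down_asymp} the integrand behaves near $p=-i\infty$ like $e^{ipz/\omega}$, which decays provided $\re(iz/\omega)<0$ along the downward ray, while near $p=+i\infty$ one uses~\eqref{eq_sigma_up_asymp}, under which the four $\sigma$-factors combine so that the Gaussian quadratic exponents cancel in the ratio and one is again left with a controlled exponential. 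This shows the integral converges for $z$ in an open sector on each side of $\R$, and the two sectors overlap once $z$ leaves the real axis.

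The key step is then to fix a real point $z_0\ne\pm(\omega k+m)$ and choose a \emph{single} contour $\gamma$ whose two tails point in directions that remain admissible for all $z$ in a full complex neighborhood of $z_0$, not merely for $z$ on one side of $\R$. Concretely, for real $z_0$ the critical directions $-\arg z_0$ degenerate, so I would tilt the two tails of $\gamma$ slightly into the interior of $D(z)$ — i.e.\ choose rays making an angle strictly inside the open interval in~\eqref{eq_gamma_asymptotes} — so that the decay estimates from~\eqref{eq_sigma_down_asymp} and~\eqref{eq_sigma_up_asymp} hold uniformly for $z$ in a neighborhood of $z_0$. Since the integrand is jointly analytic in $(p,z)$ on $\C'\times\C$ and the convergence is locally uniform, Morera's theorem (or differentiation under the integral sign) gives that $X(z)$, and hence $\Y(z)=\sin(\pi z)\sin(\pi z/\omega)X(z)$, is analytic in that neighborhood. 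Repeating this for every real $z_0$ away from the exceptional set yields analytic continuation of $\Y$ across $\R$.

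The hard part will be locating the poles, i.e.\ identifying exactly where the above tilting argument must fail. It fails precisely when, as $z$ approaches a forbidden real value, the branch cuts removed in the definition of $\C'$ force the contour to be pinched between singularities of the integrand, or when the two tails can no longer be rotated into a common admissible cone without sweeping across poles of the $\sigma$-ratio. I would track the endpoints of the four rays deleted in $\C'$ — located at $\pm(\eta-il)-\pi(1+\omega)$ and $\pm(\eta+il)+\pi(1+\omega)$ — and observe that the poles of the integrand sit at the translates of these points by the lattice $2\pi\omega\N\cup\{0\}+2\pi\N\cup\{0\}$ coming from~\eqref{eq_sigma_poles}. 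A pole of $\Y$ in $z$ can arise only from a pinch of $\gamma$ between an advancing and a receding family of these poles as the tails rotate through the real-$z$ configuration; a residue computation then shows that such pinching occurs exactly at $z=\pm(\omega k+m)$ with $k,m\in\N$. The zeros of the prefactor $\sin(\pi z)\sin(\pi z/\omega)$ at the integers and at $\omega\Z$ cancel the spurious contributions on the real axis and at $z\in\omega\Z\cup\Z$ that do not survive, leaving only the claimed pole set.

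Finally I would verify that no additional poles are introduced at $z=0$ or at the boundary directions where $\gamma$ passes through $i\R$ between $-2il$ and $2il$: this middle segment is fixed and lies in $\C'$, so it contributes an entire function of $z$, and the analyticity across $\R$ there follows directly from joint analyticity of the integrand. Combining the continuation argument with the pole-location analysis establishes Proposition~\ref{lm:Y_continuation_real}.
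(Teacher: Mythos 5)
There is a genuine gap at your ``key step''. For a real point $z_0>0$ the admissible cone \eqref{eq_gamma_asymptotes} degenerates to $D(z_0)=\{e^{i\alpha}:\ \alpha\in(0,\pi)\}$, which contains no direction pointing towards $-i\infty$; yet the lower tail of $\gamma$ must recede to $-i\infty$. Along a ray $p=-2il+te^{i\alpha}$ with $\sin\alpha<0$ one has $|e^{ipz_0/\omega}|=e^{2lz_0/\omega}e^{-tz_0\sin\alpha/\omega}\to\infty$, while by \eqref{eq_sigma_down_asymp} the ratio of $\sigma$'s tends to $1$, so the integral in \eqref{eq_Y_integral_representation} diverges at the $-i\infty$ end for \emph{every} contour of the prescribed type when $z$ is real and positive (symmetrically, for $z_0<0$ it diverges at $+i\infty$, where by \eqref{eq_sigma_up_asymp} the quadratic exponents cancel in the ratio and leave only a constant). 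Tilting the tail towards the horizontal only slows the growth, it cannot reverse its sign, and an exactly horizontal tail gives an integrand of constant modulus, hence again divergence. Equivalently: for $\im z>0$ the lower tail must point into a sliver of directions just below the positive real axis, for $\im z<0$ just below the negative real axis; these slivers are disjoint and shrink to nothing as $z\to z_0$, so there is no single contour giving a locally uniformly convergent representation in a full neighborhood of a real point, and Morera's theorem cannot be applied to $X(z)$ itself.

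The paper circumvents exactly this obstruction by a regularization that your argument does not use: the prefactor $\sin(\pi z)\sin(\pi z/\omega)$ is expanded into the four exponentials $e^{i(s_1\pi\omega+s_2\pi)z/\omega}$, $s_1,s_2=\pm1$, which are absorbed as shifts $p\mapsto p+s_1\pi\omega+s_2\pi$ of the integration variable; moving the four shifted contours back to $\gamma(z)$ produces an explicitly entire residue term $R(z)$, and the functional equations \eqref{eq_sigma_main_equation} and \eqref{eq_sigma_second_equation} collapse the alternating sum $\sum_{s_1,s_2}s_1s_2\,\hat X(p-s_1\pi\omega-s_2\pi)$ into a single integrand with genuinely improved decay, $O(e^{i(z-1-\omega)p/\omega})$ at $-i\infty$ and $O(e^{i(z+1+\omega)p/\omega})$ at $+i\infty$. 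That extra decay is what allows the contour to be pushed onto the imaginary axis and gives analyticity of $\Y$ in the whole strip $|\re z|<1+\omega$; no pinching analysis occurs there. The pole set $\pm(\omega k+m)$, $k,m\in\N$, is then obtained by continuing $\Y$ out of this strip by means of equation \eqref{eq:complex-Mary} itself (whose potential $\cot(\pi z)$ has poles on $\Z$), not from residues of the $p$-integral. So your third paragraph would have to establish a pinching statement that the paper never needs, and as written it is a plan rather than an argument; the decisive missing idea is the regularization of the integral by the prefactor.
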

\begin{proof}
If $\im z\ne0$,~(\ref{eq_Y_integral_representation}) implies that
\begin{equation*}
\begin{split}
\Y(z)&=-\frac{1}{4}\int_{\gamma(z)}
(e^{\frac{i (p+\pi\omega+\pi) z}{\omega}}-e^{\frac{i (p+\pi\omega-\pi) z}{\omega}}
-e^{\frac{i (p-\pi\omega+\pi) z}{\omega}}+e^{\frac{i (p-\pi\omega-\pi) z}{\omega}})
\hat{X}(p)\,dp=\\
&=-\frac{1}{4}\sum_{s_1,s_2=\pm 1}
s_1 s_2\int_{\gamma(z)+s_1\pi\omega+s_2\pi}e^{\frac{i p z}{\omega}}
\hat{X}(p-s_1\pi\omega-s_2\pi)\,dp\\
&=-\frac{1}{4}\sum_{s_1,s_2=\pm 1}
s_1 s_2\int_{\gamma(z)}e^{\frac{i p z}{\omega}}
\hat{X}(p-s_1\pi\omega-s_2\pi)\,dp+2\pi i R(z),
\end{split}
\end{equation*}
where $e^{i p z/\omega}\hat X$ is the integrand 
in~(\ref{eq_Y_integral_representation}), and $R(z)$ denotes the sum 
of the residues appeared when deforming the integration contour. 
The function $R$ is entire. To analyse it, we note that, 
in~(\ref{eq_Y_integral_representation}), both the  integrand 
and the part of the integration countur situated in $\{|\im p|\le 2l\}$ 
are independent of $z$. Furthermore, the poles of the integrand 
are located on the lines $\im p=\pm l$. This implies that $R$ is given 
by one and the same formula both for  $\im z>0$ and for  $\im z<0$.  
By means of~(\ref{eq_sigma_main_equation}) 
and~(\ref{eq_sigma_second_equation}), the last representation for  
$\Y$ can be transformed to the form
\begin{equation*}
\Y(z)=\int_{\gamma(z)}e^{\frac{i p z}{\omega}}
\frac{A\hat{X}(p-\pi-\pi\omega)}
{(\cos p-\cos (\eta+il))(\cos(p/\omega)-\cos ((\eta+il)/\omega))}
\,dp+2\pi i R(z),
\end{equation*}
where $A={\rm sh}\, l\,{\rm sh}\, (l/\omega)\,\sin\eta\,\sin(\eta/\omega)$.

Using~(\ref{eq_sigma_down_asymp}) and~(\ref{eq_sigma_up_asymp}), 
one can easily see that, for any fixed $C>0$, in $K_C$, the integrand 
admits the estimates
$O(e^{i(z-1-\omega)p/\omega})$ for $p \to -i\infty$ and  
$O(e^{i(z+1+\omega)p/\omega})$ for $p \to +i\infty$.
 
Assume that  $|\re z|<1+\omega$. Thanks to the last two estimates, 
we can deform the integration contour to the imaginary axis (both 
for $\im z>0$ and $\im z<0$). In the strip $|\re z|<1+\omega$, the 
obtained contour integral converges for all $z$  and defines an 
analytic function.  Therefore, $\Y$ is analytic in the strip
$\{z\in\C\,:\,|\re z|<1+\omega\}$. It can be continued to a meromorphic 
one directly via equation~(\ref{eq:complex-Mary}). This equation also  
implies the statement on the poles of  $\Y$.
\end{proof}
\subsection{Behavior of  $\Y$ for $ \im z\to\pm\infty$}
\label{subsect_asymptotics}
Here, first, we get the asymptotics of $\Y$ for $\im z\to\pm\infty$, and then,
we check that  $\Y$ is a minimal meromorphic solution of~(\ref{eq:complex-Mary}).
\begin{Pro}\label{lm_Y_z_asymptotic behavior}
Fix $C>0$. If $|\eta|<\pi(1+\omega)$, then, in $K_C$, 
\begin{gather}
\label{eq_Y_up_expansion}
\Y(z)=e^{(l-i\eta) z/\omega}(a_+ +o(1)) +e^{-(l-i\eta) z/\omega}(a_- +o(1)),
\quad \im z\to +\infty,\\
\label{eq_Y_down_expansion}
\Y(z)=e^{(l+i\eta)z/\omega} (b_+ +o(1))+e^{-(l+i\eta)z/\omega} (b_- +o(1)),
\quad \im z\to -\infty,
\end{gather}
where
\begin{gather}\label{eq:Y_i_infty_leading_coefs}
a_\pm=\frac{\pi i}{2}
\frac{\sigma(\pi(1+\omega)\mp 2\eta)\sigma(\pi(1+\omega)\mp 2il)}
{\sigma(\pi(1+\omega)\mp 2(\eta+il))}\,\res_{p=\pi(1+\omega)}\frac{1}{\sigma(p)},\\
\label{eq_Y_asymp_coef_below_and_under}
b_\pm(\eta,l)=-\overline{a_\pm(\eta,l)}.
\end{gather}
\end{Pro}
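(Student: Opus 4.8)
The plan is to analyze the contour integral $X(z)=\int_{\gamma(z)}e^{ipz/\omega}\hat X(p)\,dp$ from~\eqref{eq_Y_integral_representation} (in the notation of the proof of Proposition~\ref{lm_Y_z_not_in_R}) by deforming the contour to the right, collecting residues, and then multiplying back by the explicit prefactor $\sin(\pi z)\sin(\pi z/\omega)$. I would treat $\im z\to+\infty$ in detail; the case $\im z\to-\infty$ then follows for free from the symmetry $\Y(\overline z)=-\overline{\Y(z)}$ recorded after Proposition~\ref{lm_Y_z_not_in_R}, which turns~\eqref{eq_Y_up_expansion} into~\eqref{eq_Y_down_expansion} and yields $b_\pm=-\overline{a_\pm}$, i.e.~\eqref{eq_Y_asymp_coef_below_and_under}. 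The first observation is that for large $\im z$ one has $|e^{ipz/\omega}|\asymp e^{-\re p\,\im z/\omega}$, so the integrand decreases as $\re p$ grows. Hence I deform the central (vertical) part of $\gamma(z)$ rightward to the line $\re p=R_0$ with $R_0=\pi(1+\omega)+|\eta|+\epsilon$ for a small $\epsilon>0$ chosen so that $\re p=R_0$ carries no pole. By the pole/zero data~\eqref{eq_sigma_poles}--\eqref{eq_sigma_zeroes}, the only singularities of $\hat X$ right of the imaginary axis come from the zeros of $\sigma(p-\eta-il)$ and of $\sigma(p+\eta+il)$; they sit on the lines $\im p=\pm l$ at $p=\pm(\eta+il)+\pi(1+\omega)+2\pi\omega k+2\pi m$, $k,m\ge0$. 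Deforming therefore produces $-2\pi i\sum\res$ over the finitely many poles with $\re p<R_0$, plus the integral over the shifted contour.

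Next I would organize the residues into two families. A pole of the first family contributes $e^{ipz/\omega}=e^{-(l-i\eta)z/\omega}\,e^{i\pi(1+\omega)z/\omega}\,e^{2\pi ikz}e^{2\pi imz/\omega}$, and one of the second family the same with $e^{-(l-i\eta)z/\omega}$ replaced by $e^{+(l-i\eta)z/\omega}$. The prefactor asymptotics $\sin(\pi z)\sim\tfrac{i}{2}e^{-i\pi z}$, $\sin(\pi z/\omega)\sim\tfrac{i}{2}e^{-i\pi z/\omega}$ give $\sin(\pi z)\sin(\pi z/\omega)\sim-\tfrac14 e^{-i\pi(1+\omega)z/\omega}$, which cancels the common factor $e^{i\pi(1+\omega)z/\omega}$. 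Since $e^{2\pi ikz}e^{2\pi imz/\omega}\to0$ for $(k,m)\ne(0,0)$, only the two $(k,m)=(0,0)$ poles $p=\pm(\eta+il)+\pi(1+\omega)$ survive at leading order, and they generate precisely the exponentials $e^{\pm(l-i\eta)z/\omega}$. Computing $\res\hat X$ at these two points, using $\res_{p=\pi(1+\omega)}\tfrac1{\sigma(p)}$ and evaluating the remaining $\sigma$-factors at $\pi(1+\omega)\mp2\eta$, $\pi(1+\omega)\mp2il$, $\pi(1+\omega)\mp2(\eta+il)$, and collecting the scalar $-\tfrac14\cdot(-2\pi i)=\tfrac{\pi i}{2}$, reproduces exactly the coefficients~\eqref{eq:Y_i_infty_leading_coefs}.

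The hard part will be the rigorous, uniform control of the error in $K_C$: I must show that, after multiplication by the prefactor, both the integral over the shifted contour and the residues with $(k,m)\ne(0,0)$ (finitely many of which may be captured when $|\eta|>\pi\omega$) are $o(1)$ relative to $a_\pm e^{\pm(l-i\eta)z/\omega}$. For the captured extra residues this is immediate, as each carries a factor $e^{2\pi ikz}e^{2\pi imz/\omega}\to0$ attached to one of the two leading exponentials. For the shifted integral it reduces to the bound $|\sin(\pi z)\sin(\pi z/\omega)|\cdot|e^{ipz/\omega}|\lesssim e^{(\pi(1+\omega)-R_0)\im z/\omega}=e^{-(|\eta|+\epsilon)\im z/\omega}$ on the vertical part $\re p=R_0$, together with convergence along the rays via~\eqref{eq_sigma_down_asymp}--\eqref{eq_sigma_up_asymp}; comparing with the slower-decaying leading term of size $e^{-|\eta|\,\im z/\omega}$ gives the gain $e^{-\epsilon\,\im z/\omega}\to0$. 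The cone condition $|\im z|\ge C|\re z|$ is exactly what renders all these bounds uniform, since it lets me absorb the factors $e^{\,\mathrm{const}\cdot\re z/\omega}$ (arising from $|\im p|\le 2l$ on the contour and from the $\re z$-dependence of the leading exponentials) into an arbitrarily small multiple of the dominant exponential rate. Checking that the deformation stays inside $\C'$ and that $\hat X$ is bounded on the shifted contour is then routine from the explicit description of the zeros and poles of $\sigma$.
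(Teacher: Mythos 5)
Your overall strategy is the one the paper itself uses: fix a single admissible contour $\gamma$ for all $z$ in the upper part of $K_C$, translate it to the right, pick up the residues at $p=\pm(\eta+il)+\pi(1+\omega)$ (whose contributions, after multiplication by the prefactor $\sin(\pi z)\sin(\pi z/\omega)\sim-\tfrac14e^{-i\pi(1+\omega)z/\omega}$, give exactly $a_\pm e^{\pm(l-i\eta)z/\omega}$ with the constant $\tfrac{\pi i}{2}$ and the residue of $1/\sigma$ at $\pi(1+\omega)$), note that any further captured residues carry decaying factors $e^{2\pi ikz}e^{2\pi imz/\omega}$, and deduce the $\im z\to-\infty$ case together with \eqref{eq_Y_asymp_coef_below_and_under} from the symmetry $\Y(\overline z)=-\overline{\Y(z)}$. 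All of this is correct and reproduces the paper's computation of the leading coefficients; the paper instead re-runs the argument translating $\gamma$ to the left and invokes \eqref{eq:sigma-funct-prop}, but your use of the symmetry is equivalent.

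The gap is in the error estimate for the integral over the shifted contour. You move the contour only to $\re p=R_0=\pi(1+\omega)+|\eta|+\epsilon$ with $\epsilon$ small, and you justify uniformity by claiming that the cone condition lets you absorb the factors $e^{\mathrm{const}\cdot\re z/\omega}$ into ``an arbitrarily small multiple of the dominant exponential rate.'' It does not: in $K_C$ one only has $|\re z|\le|\im z|/C$ with $C$ \emph{fixed}, so, for instance, the factor $e^{|\im p|\,|\re z|/\omega}$ coming from the central part of the contour (where $|\im p|$ is of order $l$, since the contour must pass the poles on the lines $\im p=\pm l$) is bounded only by $e^{2l\,\im z/(C\omega)}$ --- a fixed, possibly large, exponential rate. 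The same applies to the factors $e^{\mp l\re z/\omega}$ hidden in the moduli of the leading exponentials $e^{\pm(l-i\eta)z/\omega}$, and to the exponential growth in $\im z$ of $\int_\gamma|e^{ipz/\omega}|\,|dp|$ along the nearly horizontal rays of $\gamma$. With your choice of $R_0$ the remainder is only bounded by $e^{(-|\eta|-\epsilon+O(l/C))\im z/\omega}$, which need not be $o$ of either leading exponential unless $\epsilon$ is at least of order $l/C$; so the step fails as written. The repair is exactly the paper's: translate by a distance $c$ chosen \emph{sufficiently large} (depending on $C$, $l$ and the ray directions), so that the gain $e^{-c\,\im z}$ dominates both the growth of $\int_\gamma|e^{ipz/\omega}|\,|dp|$ and the moduli of $e^{\pm(l-i\eta)z/\omega}$ throughout $K_C$; the finitely many extra residues picked up on the way remain harmless for the reason you give.
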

\begin{Rem} {\sl Formulas~(\ref{eq:Y_i_infty_leading_coefs}) and the description 
of the zeros of the $\sigma$-function imply that $a_-=b_-=0$ at
$\eta=0,\omega,2\omega\dots$}
\end{Rem}  
\begin{proof}
Assume that  $z\in\C_+\cap K_C$. As $z\not\in\R$, we 
use~(\ref{eq_Y_integral_representation}). For sufficiently small
$\delta>0$, for all $z\in\C_+\cap K_C$, \ $D(z)$  contains rays 
parallel to the vectors $e^{\pm i\delta}$. Therefore, for all  $z\in\C_+\cap K_C$, 
in~(\ref{eq_Y_integral_representation}), we can choose one and the same 
integration contour  $\gamma$. 

When being translated to the right along the real line, the integration contour 
can cross poles of the integrand. These are zeros  of the denominator 
in~(\ref{eq_Y_integral_representation}) located at the points 
$\pm (il+\eta)+\pi(1+\omega)+2\pi(n+\omega m)$, \ $n,m=0,1,2\dots$. 
Let us translate the contour $\gamma$ to  $\gamma+c$, where  $c>0$ 
is chosen so that, in the course of  translation, the contour crosses
the poles at  $\pm (il+\eta)+\pi(1+\omega)$ and that, after the translation,
it does not contain any pole of the integrand.

The  function $\Y$ equals the sum of the term $I(z)$ containing the integral along 
$\gamma+c$  and $S(z)$, the sum  of (a finite number of) the contributions of the 
residues appeared when deforming $\gamma$ to $\gamma+c$. 
One has
\begin{equation}\label{S-formula}
  S(z)= e^{(l-i\eta) z/\omega}(a_+ +o(1)) +e^{-(l-i\eta) z/\omega}(a_- +o(1)),
\quad \im z\to +\infty,
\end{equation}
with $a_\pm$ given by~\eqref{eq:Y_i_infty_leading_coefs}. 
When deriving the last formula,  we take into account the fact that 
the zero of $\sigma$-function at  $p=\pi(1+\omega)$ is simple. 
Let us estimate $I(z)$. Along $\gamma+c$,  $|\hat X(p)|$ is bounded.
Therefore,
\begin{equation*}
\begin{split}
  |I(z)|\le \Const\, &e^{\pi (1+\omega) |\im z|/\omega} \int_{\gamma+c}|e^{i p z/\omega}\,dp|\\
&=\Const\,e^{\pi (1+\omega) |\im z|/\omega-c\,\im z}\, \int_{\gamma}|e^{i p z/\omega}\,dp|.
\end{split}
\end{equation*}
The last integral converges. When $\im z$ increases, it increases
exponentially.  Therefore, if $c$ is sufficiently large, then, as
$\im z\to+\infty$, the $I$ integral becomes small with respect to 
both exponentials in~\eqref{S-formula}.  This 
implies~(\ref{eq_Y_up_expansion}).

The proof of~(\ref{eq_Y_down_expansion}) is similar to the proof 
of~(\ref{eq_Y_up_expansion}), but one  translates $\gamma$ to the left. 
Omitting the details, we note only that, to  
get~(\ref{eq_Y_asymp_coef_below_and_under}), one has to 
use~(\ref{eq:sigma-funct-prop}).
\end{proof}
Now, one easily checks the main statement of the section:
\begin{Th} \label{th:existence-full}
The solution $\Y$ is a minimal meromorphic solution 
to~(\ref{eq:complex-Mary}); its asymptotic coefficients 
are given in~(\ref{eq:Y_i_infty_leading_coefs}) 
and~(\ref{eq_Y_asymp_coef_below_and_under}).
\end{Th}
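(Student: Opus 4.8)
The plan is to verify the definition of minimal meromorphic solution directly against the properties of $\Y$ already established in this section. Recall that to be a minimal meromorphic solution, $\Y$ must (i) be a meromorphic solution of~\eqref{eq:complex-Mary} with poles only at the admissible points, and (ii) admit the representations~\eqref{psi-up} and~\eqref{psi-down} in neighborhoods of $\pm i\infty$ with \emph{bounded} $\omega$-periodic coefficients $A_\pm$, $B_\pm$. The first requirement is exactly Propositions~\ref{lm_Y_z_not_in_R} and~\ref{lm:Y_continuation_real}: together they show $\Y$ is a meromorphic solution of the complex Maryland equation with poles confined to $z=\pm(\omega k+m)$, which is the pole location permitted in Remark~\ref{min-sol-analit}. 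So the substance of the proof is to extract the coefficient boundedness from the asymptotics in Proposition~\ref{lm_Y_z_asymptotic behavior}.

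First I would record that, by Theorem~\ref{lm_bloch_existence}, in $\C_+(Y)$ the canonical basis solutions satisfy $u_\pm(z)=e^{\pm(l-i\eta)z/\omega}(1+o(1))$ as $\im z\to+\infty$ in $K_C$. Comparing this with the expansion~\eqref{eq_Y_up_expansion} of Proposition~\ref{lm_Y_z_asymptotic behavior}, namely $\Y(z)=e^{(l-i\eta)z/\omega}(a_++o(1))+e^{-(l-i\eta)z/\omega}(a_-+o(1))$, I would conclude that the coefficients $A_\pm$ in the decomposition $\Y=A_+u_++A_-u_-$ satisfy $A_\pm(z)\to a_\pm$ as $\im z\to+\infty$. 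Since $A_\pm$ are $\omega$-periodic analytic functions on $\C_+(Y)$ that converge to finite limits $a_\pm$ as $\im z\to+\infty$, they are bounded there: an $\omega$-periodic analytic function on a half-plane that has a limit at $+i\infty$ is bounded (its Fourier series in $e^{2\pi i z/\omega}$ has the convergent limit as its zeroth coefficient, and the higher harmonics decay). The identical argument applied near $-i\infty$, using $d_\pm(z)=e^{\pm(l+i\eta)z/\omega}(1+o(1))$ together with~\eqref{eq_Y_down_expansion}, shows $B_\pm(z)\to b_\pm$ and hence that $B_\pm$ are bounded in $\C_-(-Y)$. This establishes minimality and simultaneously identifies the asymptotic coefficients as the limits $a_\pm$, $b_\pm$, which are exactly the quantities computed in~\eqref{eq:Y_i_infty_leading_coefs} and~\eqref{eq_Y_asymp_coef_below_and_under}.

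The main obstacle, such as it is, lies in justifying that matching the two-exponential asymptotics term-by-term legitimately identifies the \emph{limits} of the periodic coefficients rather than merely their values along special directions. I would address this by working within a cone $K_C$ where both Theorem~\ref{lm_bloch_existence} and Proposition~\ref{lm_Y_z_asymptotic behavior} give uniform asymptotics; there the two exponentials $e^{\pm(l-i\eta)z/\omega}$ have genuinely different growth rates (since $l>0$), so the decomposition into $A_+u_++A_-u_-$ is asymptotically unambiguous and forces $A_\pm\to a_\pm$. Because $A_\pm$ are $\omega$-periodic, knowing the limit along the cone $K_C$ suffices to control them on the whole half-plane $\C_+(Y)$, and the zeroth Fourier coefficient equals this common limit---which is precisely the statement preceding Theorem~\ref{th:existence-light} that $a_\pm$ (respectively $b_\pm$) are the asymptotic coefficients. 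With boundedness in hand, the definition of minimal meromorphic solution is met and the theorem follows.
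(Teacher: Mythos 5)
Your overall architecture matches the paper's: meromorphy and pole locations come from Propositions~\ref{lm_Y_z_not_in_R} and~\ref{lm:Y_continuation_real}, and minimality is reduced to showing that the periodic coefficients $A_\pm$, $B_\pm$ tend to $a_\pm$, $b_\pm$. But the one step you yourself flag as ``the main obstacle'' is resolved with an argument that is backwards. You write that in $K_C$ the two exponentials $e^{\pm(l-i\eta)z/\omega}$ ``have genuinely different growth rates (since $l>0$), so the decomposition \dots is asymptotically unambiguous and forces $A_\pm\to a_\pm$.'' First, the separation of growth rates as $\im z\to+\infty$ with $\re z$ bounded is governed by $\eta$, not by $l$: one has $|e^{\pm(l-i\eta)z/\omega}|=e^{\pm(l\re z+\eta\im z)/\omega}$. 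Second, and more importantly, when the two exponentials \emph{do} separate, the expansion~\eqref{eq_Y_up_expansion} with multiplicative $o(1)$ errors loses the subdominant coefficient entirely: the error term $e^{(l-i\eta)z/\omega}\,o(1)$ of the dominant branch can swamp $e^{-(l-i\eta)z/\omega}a_-$, so matching in such a regime determines only the dominant coefficient. Different growth rates make the identification of the subdominant coefficient \emph{more} problematic, not ``unambiguous.''

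The paper's device, which your proposal is missing, is to work where the two exponentials are comparable: it takes $z$ in the $\omega$-neighborhood of the line $i(l+i\eta)\R$, on which $(l-i\eta)z/\omega$ is purely imaginary so that $|e^{\pm(l-i\eta)z/\omega}|\asymp 1$, and it extracts the coefficients through the explicit Wronskian formulas $A_\pm(z)=\pm w(\Y(z),u_\mp(z))/w(u_+(z),u_-(z))$ from Section~\ref{sub:space} (the Wronskian with $u_\mp$ annihilates the $u_\mp$ component, so both limits are visible there). This gives $A_\pm(z)=a_\pm+o(1)$ on that strip, and only then does $\omega$-periodicity propagate the statement, with uniformity, to all of $\C_+$; the same is done for $B_\pm$ near $-i\infty$. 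Without restricting to such a region (or some equivalent mechanism for isolating each component), your term-by-term matching does not establish $A_-\to a_-$, and hence does not establish boundedness of $A_-$, so the minimality claim is not proved as written.
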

Theorem~\ref{th:existence-light} is an immediate corollary 
of this theorem. 
\begin{proof}
By Proposition~\ref{lm:Y_continuation_real},  $\Y$ is analytic in 
$|\re z|\le \pi\omega$. Consider the coefficients  $A_\pm$ 
in formula~(\ref{psi-up}) representing  $\psi=\Y$ as a linear combination 
of the canonical basis solutions $u_\pm$.  In view of 
Section~\ref{sub:space},  $A_\pm(z)=\pm w(\Y(z),u_\mp(z))\,/w(u_+(z),u_-(z))$.
Assume that  $z\in\C_+$ is in the $\omega$-neighborhood of the line $i(l+i\eta)\R$.
Then  $e^{\pm(l-i\eta)z/\omega}$ are of order of one, and 
using~(\ref{eq_Y_up_expansion}),~(\ref{eq_bloch_asymp}) 
and~(\ref{eq_bloch_down_from_up}), we get
$A_\pm(z)=a_\pm+o(1)$ as  $\im z\to+\infty$. This representation is uniform 
in $\re z$.  As $A_\pm$ are $\omega$-periodic, these representations remain  
valid and uniform in $\C_+$. One studies the coefficients $B_\pm$ in the 
representation~(\ref{psi-down}) for $\psi=\Y$ similarly. This leads to  the 
statement of the theorem.
\end{proof}
\subsection{Construction of the canonical Bloch solutions}
\label{subsec:bloch-sol-exist}
Here, using techniques developed in~\cite{BF:01} and~\cite{F-K:05b},
we prove Theorem~\ref{lm_bloch_existence}. The proof is carried out in 
several steps:
\par\noindent{\bf 1.} \ Consider equation~(\ref{eq:Mary-R}) equivalent 
to~(\ref{eq:complex-Mary}). As $\im z\to+\infty$, in this equation, 
the matrix takes the form 
$\F(z,\eta,l)=\begin{pmatrix}2\cos(\eta+il) & -1\\1 & 0
\end{pmatrix}+O(e^{-2\pi\im z})$.
The eigenvalues of the leading term equal   $\nu^{\pm 1}$, \  
$\nu=e^{-i(\eta+il)}$. Put $\phi=V^{-1}\psi$, where 
$V=\begin{pmatrix}1 & 1\\ 1/\nu & \nu\end{pmatrix}$, and $\psi$ is 
a vector solution of~(\ref{eq:Mary-R}). In a neighborhood of $+i\infty$,
$\phi$ solves the equation
\begin{equation}\label{eq:diagonalized}
  \phi(z+h)=(D+m(z))\phi(z),\quad D=\begin{pmatrix}\nu & 0\\ 0 & 1/\nu
\end{pmatrix}, \quad m(z)=O(e^{-2\pi\im z}).
\end{equation}
\par\noindent{\bf 2.} \ Let $\phi_{1}(z)$ and $\phi_{2}(z)$ be the first 
and the second components of the vector $\phi(z)$.
Put $\Phi(z)=\phi_2(z)/\phi_1(z)$. Then
\begin{equation}\label{eq_phi_difference_equation}
\Phi(z+\omega)=\frac{(1/\nu+m_{22}(z))\Phi(z)+m_{21}(z)}
{\nu +m_{11}(z)+m_{12}(z)\Phi(z)}.
\end{equation}
We construct a solution of this equation by means of the 
technique described in Section 4.1.1 from~\cite{F-K:05b}. 
Consider the sequence of functions defined by the formulas 
\begin{equation*}
\Phi_{n+1}(z+\omega)=\frac{(1/\nu+m_{22}(z))\Phi_n(z)+m_{21}(z)}
{\nu +m_{11}(z)+m_{12}(z)\Phi_n(z)},\quad n\ge 0,\quad \Phi_0(z)\equiv 0.
\end{equation*}
Let $D\subset\C$ be a domain.
Repeating the proof of Proposition 4.1 from~\cite{F-K:05b}, 
we show that if $|\nu|>1$,  and  $\sup_{z\in D}|m(z)|$  is sufficiently 
small, then, for all $n\in\N$ and  $z\in D$, \  $|\Phi_n(z)|\le 1$,
the  sequence $\{\Phi_n\}$ converges uniformly  in $z\in D$, and
the limit $\Phi$ solves~\eqref{eq_phi_difference_equation}.
As, in a neighborhood of  $+i\infty$,  $m$ is analytic and $1$-periodic 
and satisfies the estimate in~(\ref{eq:diagonalized}), we conclude that,  
in a neighborhood of $+i\infty$, there exists an analytic $1$-periodic 
bounded solution $\Phi$ of equation~\eqref{eq_phi_difference_equation}.
As $\Phi$ is $1$-periodic and bounded, it can be represented by the 
Fourier series of the form $\Phi(z)=\sum_{m=0}^\infty q_n e^{2\pi i m z}$.
Substituting it into~\eqref{eq_phi_difference_equation}, one checks that
$\Phi(z)=O(e^{-2\pi\im z})$ as $\im z\to+\infty$.
\par\noindent{\bf 3.} \ If $\Phi$ solves~\eqref{eq_phi_difference_equation}
and $\phi_1$ satisfies the equation
\begin{equation}\label{eq:first-component}
  \phi_1(z+\omega)=(\nu +m_{11}(z)+m_{12}(z)\Phi(z))\phi_1(z),
\end{equation}
then the vector with the components  $\phi_1(z)$ and $\phi_2=\phi_1(z)\Phi(z)$
solves~(\ref{eq:diagonalized}). 

\par\noindent{\bf 4.} \ Let $\Phi$ be the function constructed in the 
second step. To construct a solution of~\eqref{eq:first-component}, 
we use Lemma 2.3 from~\cite{BF:01}. It can be formulated in the following 
way:
\begin{Le}\label{Le-g}  Let $g$ be a  $1$-periodic function analytic in 
a neighborhood of $+i\infty$ such that $g(i\infty)=0$. Then equation 
$f(z+\omega)-f(z)=g(z)$ 
with a fixed $0<\omega<1$ has a solution  analytic in a neighborhood of  
$+i\infty$ and decreasing  as $\im z\to+\infty$ uniformly in 
$\{z\in\C\,:\,|\re z|\le C\}$, where $C>0$ is an arbitrary fixed constant.
\end{Le}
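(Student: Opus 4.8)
The plan is to produce one particular solution, since any two solutions of $f(z+\omega)-f(z)=g(z)$ differ by an $\omega$-periodic function and the lemma only asserts existence. The hypotheses on $g$ (one-periodicity, analyticity in a half-plane $\im z>Y$, and $g(i\infty)=0$) allow me to expand it as a Fourier series $g(z)=\sum_{n\ge1}g_n e^{2\pi i n z}$ containing no modes of order $n\le 0$. Inverting the shift mode by mode then gives the formal answer $f(z)=\sum_{n\ge1}\frac{g_n}{e^{2\pi i n\omega}-1}\,e^{2\pi i n z}$. The point I expect to be the main obstacle is that the divisors $e^{2\pi i n\omega}-1=2i\,e^{i\pi n\omega}\sin(\pi n\omega)$ may be arbitrarily small for a general irrational $\omega$; these small divisors can destroy the convergence of the series in every half-plane, so this closed formula cannot be used directly, and the whole difficulty of the lemma is concentrated here.

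To bypass the small divisors I would realize the (otherwise divergent) telescoping sum $\sum_{k\ge1}g(z-k\omega)$, whose $k$-th term is the residue at $s=k$, as a Cauchy-type contour integral
\[
f(z)=-\frac{1}{2i}\int_{\Gamma}g(z-\omega s)\,\cot(\pi s)\,ds ,
\]
where $\Gamma$ is chosen so that both of its ends run to $\im s\to-\infty$. Along such a contour $\im(z-\omega s)=\im z-\omega\,\im s\to+\infty$, so the hypotheses on $g$ guarantee that $g(z-\omega s)$ stays inside the analyticity half-plane and decays exponentially, while $\cot(\pi s)$ stays bounded. This makes the integral absolutely convergent and, by differentiation under the integral sign, analytic in $z$ whenever $\im z$ is large enough that the entire contour lies in $\{\,\im(z-\omega s)>Y\,\}$. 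The role of the contour is precisely to sum the shifted copies of $g$ in the direction in which $g$ decays, rather than along the real axis where it does not.

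It then remains to check the equation and the decay. Writing $g(z+\omega-\omega s)=g(z-\omega(s-1))$, substituting $s\mapsto s+1$, and using the $\pi$-periodicity of $\cot(\pi s)$, the difference $f(z+\omega)-f(z)$ becomes $-\tfrac{1}{2i}$ times the integral over $\Gamma$ minus its unit translate; arranging $\Gamma$ so that this difference of contours collects exactly the pole of the integrand at $s=0$, the residue there reproduces $g(z)$ up to the normalizing constant, which fixes the sign in the formula. Finally, pushing $\Gamma$ deeper into $\{\im s<0\}$ and bounding $g(z-\omega s)$ by its exponential estimate shows that $f(z)\to 0$ as $\im z\to+\infty$, uniformly on each vertical strip $|\re z|\le C$. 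I expect the real work to be the contour bookkeeping (keeping $\Gamma$ inside the analyticity domain of $g$ so that only the residue at $s=0$ is picked up) together with these decay estimates; the price paid for avoiding the small divisors is that the control is uniform only on vertical strips and not on the whole half-plane, which is exactly what the lemma claims. This follows the contour technique of~\cite{BF:01} and~\cite{F-K:05b}.
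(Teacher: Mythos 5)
You have correctly located the difficulty (the small divisors $e^{2\pi in\omega}-1$), but the contour integral you propose does not circumvent it, and in fact cannot, for a structural reason. Because $g$ is $1$-periodic and the poles of $\cot(\pi s)$ sit at the fixed points $s\in\Z$, independent of $z$, your formula $f(z)=-\tfrac{1}{2i}\int_\Gamma g(z-\omega s)\cot(\pi s)\,ds$ defines a function that is automatically $1$-periodic in $z$ (the admissible contours for $z$ and $z+1$ are homotopic in the domain of analyticity of the integrand). A $1$-periodic solution that is analytic and decaying near $+i\infty$ is forced to have Fourier coefficients $f_n=g_n/(e^{2\pi in\omega}-1)$ --- exactly the series you rejected. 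For a Liouville $\omega$ and, say, $g(z)=e^{2\pi iz}/(1-e^{2\pi iz})$ (all $g_n=1$), these coefficients grow faster than any $e^{2\pi nY}$ along a subsequence, so \emph{no} $1$-periodic solution exists in any half-plane; hence no formula whose output is $1$-periodic can prove the lemma. Concretely, the failure is in your residue bookkeeping: for $\int_{\Gamma-1}-\int_\Gamma$ to collect only the pole at $s=0$, the set $S$ of integer poles enclosed by $\Gamma$ must satisfy $(S-1)\,\triangle\, S=\{0\}$, which forces $S=\{1,2,3,\dots\}$ or $S=\{0,-1,-2,\dots\}$. A contour enclosing such an infinite set must run out to $\re s=\pm\infty$ near the real axis, where $g(z-\omega s)$ is merely bounded (the shift $\omega s$ is horizontal and $g$ decays only vertically) and $\cot(\pi s)$ is bounded, so the integral diverges --- the divergence of $\sum_k g(z-k\omega)$ resurfaces. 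If instead $\Gamma$ is a hairpin with both ends at $-i\infty$, it encloses only finitely many poles; the integral then converges but equals a finite linear combination of translates $g(z-k\omega)$, which does not solve the equation.

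Note also that the paper does not prove this lemma; it imports it as Lemma 2.3 of \cite{BF:01}. The proof there keeps your Fourier starting point but abandons the periodicity of $f$, which is where the extra freedom comes from: one solves mode by mode, taking $f_n(z)=g_ne^{2\pi inz}/(e^{2\pi in\omega}-1)$ when $|e^{2\pi in\omega}-1|$ is not small, and otherwise
\[
f_n(z)=g_n\,\frac{e^{2\pi inz}-e^{2\pi in_*z/\omega}}{e^{2\pi in\omega}-1},
\qquad n_*=\text{the integer nearest to } n\omega .
\]
The subtracted exponential is $\omega$-periodic, so each $f_n$ still solves $f_n(z+\omega)-f_n(z)=g_ne^{2\pi inz}$, while the numerator now vanishes together with the denominator: one gets $|f_n(z)|\le C\,|z|\max\bigl(e^{-2\pi n\im z},e^{-2\pi n_*\im z/\omega}\bigr)$ with no small divisor. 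Summing against $|g_n|\le Ce^{2\pi nY'}$ yields a solution analytic near $+i\infty$ and $o(1)$ as $\im z\to+\infty$ uniformly on each strip $|\re z|\le C$; the restriction to vertical strips enters through the factor $|z|$, not, as in your sketch, through a contour estimate.
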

Define $A(z)=(\nu +m_{11}(z)+m_{12}(z)\Phi(z))$.
The estimates for $\Phi$ and $m$ for $\im z\to +\infty$ imply that
$A(z)=\nu(1+o(1))$. Choose the branch of $B=\ln A$ so that
$B=-i(\eta+il)+g$ and  $g(z)=o(1)$ as $\im z\to+\infty$.
Let $f$ be the function constructed by means of Lemma~\ref{Le-g} 
in terms of  $g$. Then  $\phi_1(z)=e^{-i(\eta+il)z/\omega+f(z)}$ 
solves~\eqref{eq:first-component}. Using the observation made at the third
step, one constructs in terms of $\phi_1$ a solution of~(\ref{eq:diagonalized}) 
analytic in a neighborhood of $+i\infty$ and such that, as $\im z\to+\infty$,
$\phi(z)=e^{-i(\eta+il)z/\omega}\left(\begin{pmatrix} 1\\0\end{pmatrix}
+o(1)\right)$  uniformly in $\{z\in\C\,:\,|\re z|\le C\}$, \ 
$C>0$ being a fixed constant.
\par\noindent{\bf 5.} \ The function $\phi$ is a Bloch solution, i.e.,
$\phi(z+1)=\alpha(z)\phi(z)$, where $\alpha$ is an 
$\omega$-periodic. Indeed, a direct calculation shows that, for any 
solution $f$ of the equation $f(z+\omega)-f(z)=g(z)$ with a given $1$-periodic 
function $g$, \  $f(z+1)-f(z)$ is $\omega$-periodic. 
This implies that $\alpha(z)=\phi_1(z+1)/\phi_1(z)$ is $\omega$-periodic. As
$\Phi$ is $1$-periodic, one has   $\phi_2(z+1)/\phi_2(z)=\phi_1(z+1)/\phi_1(z)$.
This implies the needed. 
\par\noindent{\bf 6.} \  Fix $C_1>0$. Let us show that the asymptotics of 
$\phi$ is uniform in  $K_{C_1}$. Consider the coefficient $\alpha$ 
from the definition of the Bloch solution $\phi$. As it is  $\omega$-periodic,  
the asymptotics for $\phi$ in $\{z\in\C\,:\,|\im z|\le C\}$ implies that, 
as $\im z\to +\infty$, \ $\alpha(z)=e^{-i(\eta+il)/\omega+O(e^{-2\pi\im z/\omega})}$ 
uniformly in $\re z$. Assume that $z\in K_C$. Let $N$ be the integer part of
$\re z$. One has
\begin{equation*}
 \phi(z)=\left(\prod_{n=1}^{N}\alpha(z-n)\right) \phi(z-N)=
e^{-i(\eta+il)N/\omega+O(\im z\,e^{-2\pi\im z/\omega})}\,\phi(z-N). 
\end{equation*}
Substituting in this formula the asymptotics for $\phi$ justified for 
bounded $|\re z|$, we obtain the needed.
\par\noindent{\bf 7.} \ One can easily see that the first component $\psi_1$ 
of a vector solution $\psi$ of~(\ref{eq:Mary-R}) 
satisfies~(\ref{eq:complex-Mary}). Let  $\psi=V\,\phi$, where 
$\phi$ the solution of~(\ref{eq_phi_difference_equation})
constructed in the previous steps. By the result of the first step, 
$\psi$ solves~(\ref{eq:Mary-R}). We construct solutions $u_\pm$ 
of~(\ref{eq:complex-Mary}) by the formulas
$u_+(z)=\psi_1(z)$ and $u_-(z)=\overline{u_+(-\overline{z})}$.
One can easily check that these solutions have all the properties listed 
in Theorem~\ref{lm_bloch_existence}. We omit the elementary calculations.
\bibliographystyle{abbrv}

\end{document}